\newcommand{\VCD}{\textsc{VC-Dimension}\xspace}
\newcommand{\GVCD}{\textsc{Gen-VC-Dimension}\xspace}
\newcommand{\GRVCD}{\textsc{Graph-VC-Dimension}\xspace}
\newcommand{\bO}{\mathcal{O}\xspace}
\newcommand{\set}[1]{\left\{#1\right\}}
\newcommand{\tw}{{\rm tw}}
\newcommand{\vcn}{\mathsf{vcn}}
\newcommand{\ETH}{$\mathsf{ETH}$\xspace}
\DeclareMathOperator{\inc}{{\sf inc}}
\newcommand{\V}{\mathcal{V}}
\renewcommand{\E}{\mathcal{E}}
\newtheorem{theorem}{Theorem}
\newtheorem{definition}[theorem]{Definition}
\newtheorem{proposition}[theorem]{Proposition}
\newtheorem{lemma}[theorem]{Lemma}
\newtheorem{observation}[theorem]{Observation}
\newtheorem{claim}[theorem]{Claim}
\newenvironment{proofofclaim}{%
    
  \proof}{\endproof}
\newcommand{\Pb}[4]{%
{\centering
  \begin{tabular}{|l|}%
  \hline
    \begin{minipage}[c]{0.96\textwidth}
      \smallskip%
      \par\noindent%
      #1%
      \par\noindent%
      %$\bullet$
      \textbf{\textsf{Input}}: #2% 
      \par\noindent%
      %$\bullet$
      \textbf{\textsf{#3}}: #4 
      \smallskip%
      \par\noindent%
    \end{minipage}
  \\\hline
  \end{tabular}%
  }
}%
\title{The Parameterized Complexity of \\ Computing the VC-Dimension}
\author{%
  Florent~Foucaud
  %\thanks{Use footnote for providing further information about author (webpage, alternative address)---\emph{not} for acknowledging funding agencies.} 
  \\
  Universit\'{e} Clermont Auvergne, \\ CNRS,
  Mines Saint-\'{E}tienne, \\
  Clermont Auvergne INP, LIMOS \\
  Clermont-Ferrand, France \\
  \texttt{florent.foucaud@uca.fr} \\
  % examples of more authors
  \And
  Harmender Gahlawat \\
  Universit\'{e} Clermont Auvergne, \\ CNRS,
  Mines Saint-\'{E}tienne, \\
  Clermont Auvergne INP, LIMOS \\
  Clermont-Ferrand, France \\
  \texttt{harmendergahlawat@gmail.com}
  \And
  Fionn~{Mc~Inerney} \\
  Telef\'{o}nica Scientific Research \\
  Barcelona, Spain \\
  \texttt{fmcinern@gmail.com} \\
  \And
  Prafullkumar Tale \\
  Indian Institute of Science Education and Research Pune \\
  Pune, India \\
  \texttt{prafullkumar@iiserpune.ac.in}
}
\begin{document}

\maketitle

\begin{abstract}
The VC-dimension is a well-studied and fundamental complexity measure of a set system (or hypergraph) that is central to many areas of machine learning. We establish several new results on the complexity of computing the VC-dimension. In particular, given a hypergraph $\mathcal{H}=(\mathcal{V},\mathcal{E})$, we prove that the naive $2^{\mathcal{O}(|\mathcal{V}|)}$-time algorithm is asymptotically tight under the Exponential Time Hypothesis (\ETH). We then prove that the problem admits a $1$-additive fixed-parameter approximation algorithm when parameterized by the maximum degree of $\mathcal{H}$ and a fixed-parameter algorithm when parameterized by its dimension, and that these are essentially the only such exploitable structural parameters. Lastly, we consider a generalization of the problem, formulated using graphs, which captures the VC-dimension of both set systems and graphs. We design a $2^{\mathcal{O}(\rm{tw}\cdot \log \rm{tw})}\cdot |V|$-time algorithm for any graph $G=(V,E)$ of treewidth $\rm{tw}$ (which, for a set system, applies to the treewidth of its incidence graph). This is in contrast with closely related problems that require a double-exponential dependency on the treewidth (assuming the \ETH).
\end{abstract}

\section{Introduction}\label{S:intro}

Vapnik and Chervonenkis introduced the \emph{Vapnik-Chervonenkis Dimension} (VC-dimension)~\cite{vcdim} as a measure of the richness of the expressivity of a set system. Given a non-empty finite set $\V$ and a set system $\mathcal{C}\subseteq 2^{\V}$, the \emph{VC-dimension} of $\mathcal{C}$ is the size of a largest subset $S\subseteq \V$ that is \emph{shattered} by $\mathcal{C}$, i.e., such that $\{C\cap S:C\in \mathcal{C}\}=2^S$. The VC-dimension has proven to be immensely important in numerous fields, including machine learning, geometry, and combinatorics. Notably, the VC-dimension is intrinsic to several prominent topics in machine learning, such as $\epsilon$-nets~\cite{HW87}, sample compression schemes~\cite{LW86}, and machine teaching~\cite{GK95,GRS93}.

In particular, $\epsilon$-nets are well-studied in learning theory (see, e.g.,~\cite{BHV10,BDS25,BEHW89,HY15,PW90}) and have applications in the adversarial robustness of machine learning models (see, e.g.,~\cite{CBM18,MHS19}), with the seminal paper by Haussler and Welzl~\cite{HW87} proving the famous \emph{$\epsilon$-net theorem}, which roughly states that set systems with fixed VC-dimension admit small $\epsilon$-nets. Furthermore, in relation to Valiant's probably approximately correct~(PAC)~learning~\cite{V84}, the VC-dimension is at the heart of one of the oldest open problems in machine learning: the \emph{sample compression conjecture} of Floyd and Warmuth~\cite{FW95}. This~problem asks whether every set system of VC-dimension~$d$ admits a sample compression scheme of size~$\bO(d)$, and it is actively pursued to this day, with a plethora of works making progress on it over the years (see, e.g.,~\cite{BBC0S23,BL98,CCMRV23,CCMW22,CHMY24,FW95,HSW90,MW16,MY16,PT20}). The VC-dimension is also central to various machine teaching models and their major open problems. Specifically, Simon and Zilles asked whether every set system of VC-dimension $d$ has recursive teaching dimension $\bO(d)$~\cite{SZ15}, and Kirkpatrick, Simon, and Zilles asked whether every such set system has non-clashing teaching dimension at most $d$~\cite{KSZ19}, with numerous works (see, e.g.,~\cite{CCMR24,CCCT16,DFSZ14,FKSSZ23,GKMR25,HWLW17,MSSZ22,MSWY15,S23,S25}) making advances toward these questions and related directions. Overall, the fundamental nature of the VC-dimension in these various areas has motivated the study of its computational complexity, with the associated decision problem often equivalently formulated using hypergraphs as follows.

\Pb{\VCD}{A hypergraph $\mathcal{H}=(\V,\E)$ and a positive integer $k$.}{Question}
{Does there exist a subset $S\subseteq \V$ such that $|S|\geq k$ and $\{S \cap e : e \in \E\}=2^S$?}

It is easy to see that \VCD can be solved in $|\mathcal{H}|^{\bO(\log |\mathcal{H}|)}$ time, i.e., quasi-polynomial time, and thus, is most likely not \NP-hard.
Papadimitriou and 
Yannakakis~\cite{DBLP:journals/jcss/PapadimitriouY96} 
introduced the 
complexity class \textsf{LogNP} and 
proved that \VCD is \textsf{LogNP}-complete. As \textsf{LogNP} lies in between \P\ and \NP, and both inclusions are believed to be proper, it is unlikely that \VCD is in \P\ either. Their result also implies that, assuming the Exponential Time Hypothesis (\ETH) (the \ETH is a standard conjecture in computational complexity), \VCD cannot be solved in $|\mathcal{H}|^{o(\log |\mathcal{H}|)}$ time. Recently, Manurangsi~\cite{M23} proved that \VCD is highly inapproximable under well-established complexity-theoretic hypotheses, even ruling out a polynomial-time approximation algorithm with $o(\log |\mathcal{H}|)$ approximation factor, assuming the Gap-\ETH\ (a strengthening of the~\ETH). 

\noindent \textbf{Our Contributions.} 
We make further progress on the complexity of computing the VC-dimension. We first complement the result of Papadimitriou and Yannakakis by proving that the naive brute-force $2^{\bO(|\V|)}$-time algorithm\footnote{Note that $|\mathcal{H}|=2^{\bO(|\V|)}$ and testing whether a set is shattered can be done in polynomial time.} for \VCD --- that tests each possible subset of $\V$ to see whether it is shattered --- has a tight running time under the \ETH (Theorem~\ref{T:vcn}).

Together with the hardness results from the literature, this motivates studying the \emph{parameterized complexity}~\cite{bookParameterized,DF13} of \VCD. Indeed, this paradigm allows for a refined analysis of the computational complexity of a problem by measuring its complexity not only with respect to the input size $I$, but also an integer parameter $\ell$ that either originates from the problem formulation or captures well-defined structural properties of the input. The aim for such a parameterized problem is to design an algorithm solving it in
$f(\ell) \cdot |I|^{\mathcal{O}(1)}$ time for a computable 
function $f$; this is known as a  \emph{fixed-parameter algorithm}. 
Parameterized problems that admit such an algorithm are 
called \emph{fixed-parameter tractable} (\FPT) with respect to 
the considered parameter. Under standard complexity assumptions, parameterized problems 
that are hard for the complexity class \W[1] are not \FPT.

In the above setting, Downey, Evans, and Fellows~\cite{DBLP:conf/colt/DowneyEF93} proved
that \textsc{VC-Dimension} is \W[1]-complete
when parameterized by the solution size $k$, and Manurangsi~\cite{M23} even ruled out an \FPT\ approximation algorithm with factor $o(k)$, assuming the Gap-\ETH. However, apart from a result of Drange, Greaves, Muzi, and Reidl~\cite{DBLP:conf/iwpec/DrangeGMR23} proving that \VCD is \W[1]-hard parameterized by the degeneracy of the hypergraph $\mathcal{H}$, little is known about structural parameterizations of \VCD.

We perform a systematic analysis in this direction. We prove that there is an \FPT\ $1$-additive approximation algorithm for \VCD parameterized by the maximum degree
$\Delta$ of $\mathcal{H}$ (Theorem~\ref{T:Delta}), i.e., it computes a shattered set of size at least VC-dimension minus one.
It can also be observed that \VCD is \FPT\ parameterized by the dimension $D$ of $\mathcal{H}$, i.e., the maximum size of a hyperedge in $\mathcal{H}$. Indeed, any shattered set is contained within a hyperedge of $\mathcal{H}$. Thus, one can test whether any of the at most $2^D$ possible subsets of any hyperedge is shattered in $2^D\cdot |\mathcal{H}|^{\bO(1)}$ time. Unfortunately, we prove that the remaining core structural hypergraph parameters (hypertree-width and transversal number) do not yield \FPT\ algorithms for \VCD (Proposition~\ref{P:htw}).

In search of more exploitable structural parameters, we turn our attention to the VC-dimension in graphs. The VC-dimension of a graph $G=(V,E)$ is the VC-dimension of the set system whose ground set is $V$ and whose sets are all the open neighborhoods of the vertices in $V$. The problem \GRVCD is defined similarly to \VCD, but for graphs: it takes a graph $G$ and an integer $k$ as input, and asks whether the VC-dimension of $G$ is at least $k$.

\GRVCD is relevant for the numerous applications in machine learning where the data is inherently graph-structured, see, e.g.,~\cite{DBLP:conf/icml/0001GTG23}. \GRVCD is also interesting since instances of \VCD can be converted to instances of \GRVCD: indeed, any set system can be equivalently represented by a set of open neighborhoods in a bipartite graph (see, e.g.,~\cite{KSZ19}) or a set of closed neighborhoods in a split graph (see, e.g.,~\cite{CCMRV23}). Given a set system $\mathcal{C}\subseteq 2^{\V}$, the graph $G$ can be created as follows: for all $C\in \mathcal{C}$, there is a~vertex~$x_C$, and, for all $v\in \V$, there is a vertex $v$, with $x_C$ adjacent to $v$ if and only if $v\in C$. Then, $\mathcal{C}$ is equivalent to the set of open neighborhoods of vertices in $Y:=\{x_C: C\in \mathcal{C}\}$ in the bipartite graph~$G$. For closed neighborhoods, one needs to make the vertices in $Y$ form a clique (making $G$ a split graph) and take the closed neighborhoods of the vertices in $Y$. Both of these set systems are well-established: for open neighborhoods, see, e.g.,~\cite{Bonamyetal21,FPS19,FPS21,JP24,KSZ19,NSS24}, and for closed neighborhoods, see, e.g.,~\cite{ABC95,DBLP:journals/tcs/BazganFS19,BLLPT15,CCMRV23,CCDV24,Ducoffe21,DuHaVi,HW87,KKRUW97}. The VC-dimension of other graph-related set systems has also been considered in~\cite{BT15,CCMR24,CCMRV23,CEV07,CLR20,DHV22,DKP24,KKRUW97,S25}. See~\cite{CCDV24,DBLP:conf/iwpec/DrangeGMR23} for experimental evaluations. Due to the equivalence above, we focus on the VC-dimension of open neighborhoods,\footnote{We remark that, with minor modifications, many of our results also hold for closed neighborhoods.} and study a generalization of both \VCD\ and \GRVCD\ formulated using graphs: 

\Pb{\textsc{Generalized VC-Dimension} (\GVCD)}{A graph $G=(V,E)$, two subsets $X,Y\subseteq V$, and a positive integer $k$.}{Question}
{Does there exist a subset $S\subseteq X$ such that $|S|\geq k$ and $\{S \cap N(y) : y \in Y\}=2^S$?}

Indeed, \GRVCD and \VCD are special cases of \GVCD: any instance $(G,k)$~of \GRVCD corresponds to an instance $(G,X=V,Y=V,k)$ of \GVCD, and \VCD considers the bipartite incidence graph $G$ of the input set system (as defined above), with $X=\mathcal V$ and $Y=\{x_C : C\in \mathcal C\}$. Hence, our algorithms for \GVCD apply to both \VCD and \GRVCD.

We focus on the treewidth parameterization, which has been exploited for various machine learning problems, see, e.g.,~\cite{BrandGR23,Domke15,EG08,GanianK21,NMCJ14,SCCZ16}. Treewidth is arguably the most successful graph parameter due to the celebrated Courcelle's theorem~\cite{C90}, which states that any graph property definable in monadic second-order logic can be checked in linear time for graphs of bounded treewidth. In fact, this is the case for \GVCD~\cite[Theorem 6.5]{flum2010parameterized}.
However, the obtained dependency on the treewidth is a tower of exponentials whose height is a function of the treewidth.

We design a much faster $2^{\mathcal{O}(\tw\cdot \log\tw)}\cdot |V|$-time algorithm for \GVCD (Theorem~\ref{T:treewidth}), where $\tw$ is the treewidth of $G$. In particular, this result applies both to \GRVCD and to \VCD (where the considered graph is the bipartite incidence graph representation of the set system as described earlier).
Thus, this further motivates considering the VC-dimension of open neighborhoods in graphs rather than closed neighborhoods. Indeed, as the treewidth of a split graph is one less than the size of its largest clique, then when considering closed neighborhoods in a split graph $G$, we have that $G$ has small treewidth if and only if $Y$ is small. In contrast, for open neighborhoods in a bipartite graph $G$, it can be that $Y$ is very large even if $G$ has small treewidth. 
We also emphasize that the running time of our fixed-parameter algorithm has a relatively low dependency on the treewidth that contrasts with (tight) double-exponential dependencies on the treewidth experienced by recently studied and closely related problems~\cite{DBLP:conf/isaac/ChakrabortyFMT24,icalppaper}.

Finally, this \FPT\ algorithm is complemented by
a lower bound ruling out an algorithm for \GRVCD running in $2^{o(\vcn+k)}\cdot |V|^{\bO(1)}$ time (Theorem~\ref{T:vcn}), where $k$ is the solution size and $\vcn$ is the vertex cover number of $G$, an even larger parameter than the treewidth of $G$ as $\vcn\geq \tw$.

\section{Preliminaries}\label{S:prelim}
For $\ell \in \mathbb{N}$, let $[\ell]=\{1,\ldots, \ell\}$ and  $[0,\ell]=\set{0}\cup [\ell]$. Let $\mathbb{N}_0=\mathbb{N}\cup \set{0}$.

\textbf{Hypergraphs and Graphs.}
Let $\mathcal{H}=(\mathcal{V},\mathcal{E})$ be a hypergraph. Set $|\mathcal{H}|:=|\mathcal{V}|+|\mathcal{E}|$. For a vertex $v\in\mathcal{V}$, $\inc(v)$ is the set of edges that contain (or are \emph{incident}) to $v$.
The \emph{degree} of $v$ in $\mathcal{H}$ is $\deg_{\mathcal{H}}(v):=|\inc(v)|$.
The maximum degree of $\mathcal{H}$ is $\Delta(\mathcal{H}):=\max_{v\in \mathcal{V}}\deg_{\mathcal{H}}(v)$ (or simply~$\Delta$).
The \emph{transversal number} of $\mathcal{H}$ is the minimum size of a subset $X\subseteq \mathcal{V}$ such that $X\cap e\neq \emptyset$ for all $e\in\mathcal{E}$. 
If there exists a tree $T$ such that, for all $e\in \mathcal{E}$, $e$ is the set of vertices of a subtree of $T$, then $\mathcal{H}$ is a \emph{hypertree}~\cite{BDCV98}. 
Let $G=(V,E)$ be a graph. The \emph{open neighborhood} of a vertex $v\in V$ is the set $N(v):= \{u~|~uv\in E\}$, and the degree of $v$ in $G$ is $d_G(v) := |N(v)|$. For a subset $X\subseteq V$, let $N_X(v)=N(v)\cap X$. The \emph{closed neighborhood} of a vertex $v\in V$ is the set $N[v]:=N(v)\cup \{v\}$. For a subset $X\subseteq V$, let $G[X]$ denote the subgraph of $G$ induced by the vertices in $X$, and let $G-X$ denote the subgraph $G[V\setminus X]$ of $G$. A subset $U \subseteq V$ is a \textit{vertex cover} of $G$ if, for each edge in $G$, at least one of its endpoints is in $U$. The minimum cardinality of a vertex cover of $G$ is its \textit{vertex cover number} ($\mathsf{vcn}$). A subset $S\subseteq V$ is a separator if $G-S$ contains at least two connected components. 

A \emph{tree-decomposition} of a graph $G$ is a pair $(T,\beta)$, where $T$ is a tree and $\beta: V(T) \rightarrow 2^{V(G)}$ is a mapping from $V(T)$ (called \emph{bags}) to subsets of $V(G)$ such that:

1. for all $uv \in E(G)$, there exists $t\in V(T)$ such that $\{u,v\} \subseteq \beta(t)$;
    
2. for all $v \in V(G)$, the subgraph of $T$ induced by $T_v = \{ t \in V(T)~|~v \in \beta(t)\}$ is a non-empty tree.

\noindent The \emph{width} of a tree-decomposition $(T,\beta)$ is $\max_{t\in V(T)} |\beta(t)| - 1$. The \emph{treewidth} of $G$, denoted by $\tw(G)$ (or simply $\tw$), is the minimum possible width of a tree-decomposition of $G$. The mapping $\beta$ can be extended from vertices of $T$ to subgraphs of $T$: for a subgraph $U$  of $T$,  $\beta(U) = \bigcup_{v\in V(U)} \beta(v)$. Computing a tree-decomposition of minimum width is \FPT\ parameterized by the treewidth~\cite{Bodlaender96,Kloks94}, and a tree-decomposition of width at most $2\tw$ can be computed in $2^{\bO(\tw)}\cdot|V(G)|$ time~\cite{korhonen2023single}. For dynamic programming, it is useful to have a tree-decomposition with additional properties. A tree-decomposition $(T,\beta)$ is \emph{nice} if each node $t\in V(T)$ is exactly one of the following four types: 

1. \textbf{Leaf:} $t$ is a leaf of $T$ and $|\beta(t)|=0$.

2. \textbf{Introduce:} $t$ has a unique child $c$ and there exists $v\in V(G)$ such that $\beta(t)=\beta(c)\cup \{v\}$.

3. \textbf{Forget:} $t$ has a unique child $c$ and there exists $v\in V(G)$ such that $\beta(c)=\beta(t)\cup \{v\}$.

4. \textbf{Join:} $t$ has exactly two children $c_1,c_2$ and $\beta(t)=\beta(c_1)=\beta(c_2)$.

Let $(T,\beta)$ be a (nice) tree-decomposition of $G$ and $t$ a node of $T$. The subtree of $T$ rooted at $t$~is~$T_t$, $G_t = G[T_t]$, $G^{\uparrow}_t = G-V(G_t)$, and $G^{\downarrow}_t= G_t - \beta(t)$.  Given a tree-decomposition, we can obtain a nice tree-decomposition of the same width and with a linear number of nodes in linear time~\cite{bookParameterized}. Thus, a nice tree-decomposition of a graph $G$ of width at most $2\tw$ and with $\bO(V(G))$ nodes can be computed in $2^{\bO(\tw)}\cdot |V(G)|$ time. So, if we seek an algorithm with at least that runtime, we may assume that such a nice tree-decomposition is part of the input.

\textbf{VC-Dimension.} Let $\mathcal{H}=(\mathcal{V},\mathcal{E})$ be a hypergraph. A set $U\subseteq \mathcal{V}$ is a \emph{shattered set} if, for all $U'\subseteq U$, there exists $e\in \mathcal{E}$ such that $U\cap e = U'$. The \emph{VC-dimension of $\mathcal{H}$} is the maximum size of a shattered set of $\mathcal{H}$. For $A\subseteq \mathcal{V}$ and $A'\subseteq A$, an edge $e$ \textit{witnesses} $A'$ in $A$ if $e\cap A= A'$.  Let $U$ be a shattered set of $\mathcal{H}$. Then, there exists a set $\mathcal{E}'\subseteq \mathcal{E}$ such that, for all $U'\subseteq U$, there exists an edge $e\in \mathcal{E}'$ with $e$ witnessing $U'$ in $U$. We call such an $\mathcal{E}'$ a \emph{shattering set} of $U$, and a minimal shattering set $W\subseteq \mathcal{E}$ of $U$ is said to be a \emph{witness} of $U$. Observe that $|W| = 2^{|U|}$.
In the context of the VC-dimension of a set $\mathcal{N}$ of open neighborhoods in a graph $G=(V,E)$, a subset of vertices $U\subseteq V$ is a \emph{shattered set} if, for each subset $U'\subseteq U$, there exists a vertex $w$ that witnesses $U'$, i.e., $N(w)\in \mathcal{N}$ and $N_U(w) = U'$. 

\section{Algorithmic Lower Bounds}

In this section, via a reduction from \textsc{3-Coloring} to \GRVCD, we establish that, assuming the \ETH\footnote{The \ETH\ roughly states that $n$-variable and $m$-clause \textsc{3-SAT} cannot be solved in $2^{o(n+m)}$ time.}~\cite{DBLP:journals/jcss/ImpagliazzoP01}, \GRVCD cannot be solved in $2^{o(\mathsf{vcn}+k)}\cdot |V|^{\bO(1)}$ time, where $\vcn$ is the vertex cover number of $G$. This same reduction proves that, assuming the \ETH, \VCD cannot be solved in $2^{o(|\mathcal{V}|)}\cdot |\mathcal{H}|^{\bO(1)}$ time.
An instance of \textsc{3-Coloring} consists of a graph $G
'$, and asks whether $G'$ admits a proper coloring with 3 colors. 
The following is well-known:
 \begin{proposition}[\cite{bookParameterized}]\label{P:3-color}
     Assuming the \ETH, there exists a constant $\epsilon_c > 0$ 
     such that \textsc{3-Coloring} does not admit an algorithm running in  $2^{\epsilon_c \cdot |V(G')|}\cdot |V(G')|^{\bO(1)}$ time.
 \end{proposition}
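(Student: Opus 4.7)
The plan is to derive the claim from the standard form of \ETH\ via a linear-size reduction. The usual formulation of \ETH\ rules out $2^{o(n)}$-time algorithms for $n$-variable \textsc{3-SAT}, but what I actually need is a lower bound scaling with the number of vertices produced by a reduction to \textsc{3-Coloring}. To bridge this, I would first invoke the Sparsification Lemma of Impagliazzo, Paturi, and Zane, which upgrades \ETH\ to the statement that $n$-variable, $m$-clause \textsc{3-SAT} admits no $2^{o(n+m)}\cdot(n+m)^{\bO(1)}$-time algorithm.

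Next, I would apply the classical Karp-style reduction from \textsc{3-SAT} to \textsc{3-Coloring} (due to Garey, Johnson, and Stockmeyer). Given a \textsc{3-SAT} formula $\varphi$ with $n$ variables and $m$ clauses, it produces in polynomial time a graph $G'$ that is properly 3-colorable if and only if $\varphi$ is satisfiable. The crucial feature I would emphasize is its \emph{linearity}: $|V(G')|=\bO(n+m)$. This follows from the structure of the construction, which uses a constant-size ``palette triangle'' fixing three reference colors, a pair of literal-vertices per variable joined through the palette so that exactly one receives the ``true'' color, and a constant-size OR-gadget per clause (of constant size because clauses have length at most $3$) wiring its three literal-vertices back into the palette.

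Finally, I would argue by contradiction. Suppose that for every $\epsilon>0$, \textsc{3-Coloring} can be solved in $2^{\epsilon\cdot|V(G')|}\cdot|V(G')|^{\bO(1)}$ time. Composing with the above reduction gives a $2^{\epsilon\cdot c\cdot(n+m)}\cdot(n+m)^{\bO(1)}$-time algorithm for \textsc{3-SAT}, where $c$ is the constant hidden in the $\bO(n+m)$ vertex bound. Choosing $\epsilon$ so that $\epsilon\cdot c$ falls below the threshold supplied by the sparsified \ETH\ contradicts that bound; fixing such a value of $\epsilon$ yields the desired constant $\epsilon_c>0$.

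The only real obstacle is checking the linearity of the reduction: each clause gadget must have size independent of $n$ and $m$, and the wiring must not introduce a super-linear blow-up. Both are routine for \textsc{3-SAT} specifically and fail, for instance, for general \textsc{SAT} — this is precisely why the sparsification step is indispensable, since without it one would only obtain a $2^{\Omega(n)}$ lower bound, which is too weak to rule out $2^{\epsilon_c\cdot|V(G')|}$-time algorithms in regimes where $m\gg n$.
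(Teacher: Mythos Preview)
Your proof is correct and is exactly the standard derivation one finds in the cited reference. Note, however, that the paper does not actually prove Proposition~\ref{P:3-color}: it is stated as a known result imported from \cite{bookParameterized}, with no proof given in the paper itself. Your Sparsification-Lemma-plus-linear-reduction argument is precisely the textbook route used there, so there is nothing to compare---you have reconstructed the intended background proof.
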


\noindent \textbf{Reduction.} 
 Given an instance $G'$ of \textsc{3-Coloring}, in time exponential in $|V(G')|$ (i.e., $2^{\bO(|V(G')|)}$ time, but the value of the constant will be refined later), our reduction returns an instance $(G,k)$ of \GRVCD. 
 We note that it is expected that this reduction takes super-polynomial time since \textsc{3-Coloring} is \NP-hard, while \VCD\ can be solved in quasi-polynomial time, and hence, a polynomial-time reduction would imply that every problem in \NP\ can be solved in quasi-polynomial time. Now, for the reduction, fix a constant $0< \epsilon_1 <1$, 
 set $k=\lceil \epsilon_1  |V(G')| \rceil$, and arbitrarily partition the vertices of $V(G')$ into $k$ parts $V_1,\ldots, V_k$, each of size at most $p= \lceil \frac{1}{\epsilon_1} \rceil$. 
 Note that, for all $i \in [k]$, there are at most $3^p$ possible $3$-colorings of $G'[V_i]$.
 For each $V_i$, enumerate each valid coloring of $G'[V_i]$, and add a vertex corresponding to it in $U_i$ (which is in $G$). 
 Thus, in $G$, for all $i\in [k]$, we have an independent set of vertices $U_i$ such that $|U_i| \leq 3^{p}$.
 Set $X := \bigcup_{i\in [k]} U_i$. 
 We ensure that $X$ is a vertex cover of $G$. Thus, $G$ may be partitioned into $X$ and an independent set~$Y$. We now specify the vertices in $Y$, which we partition into three sets $I_1, I_2,$ and $I_{\geq3}$. For each vertex $u\in X$, we add a vertex $w\in I_1$ adjacent to $u$.
For all distinct $i,j \in [k]$, $u\in U_i$, and $v\in U_j$, we add a vertex $w\in I_2$ and make it adjacent to $u$ and $v$ if and only if $u$ and $v$ are \emph{consistent with each other}, by which we mean the union of their corresponding colorings is a proper coloring of $G'[V_i\cup V_j]$. 
For each $A\subseteq [k]$ with $|A|\geq 3$, we add a vertex $w\in I_{\geq 3}$ adjacent to each vertex in $U_j$ for $j \in A$. This completes our reduction (see~\Cref{fig:reduction}). We now prove its correctness. 

\begin{figure}[h]
    \centering
    \includegraphics[scale=0.63]{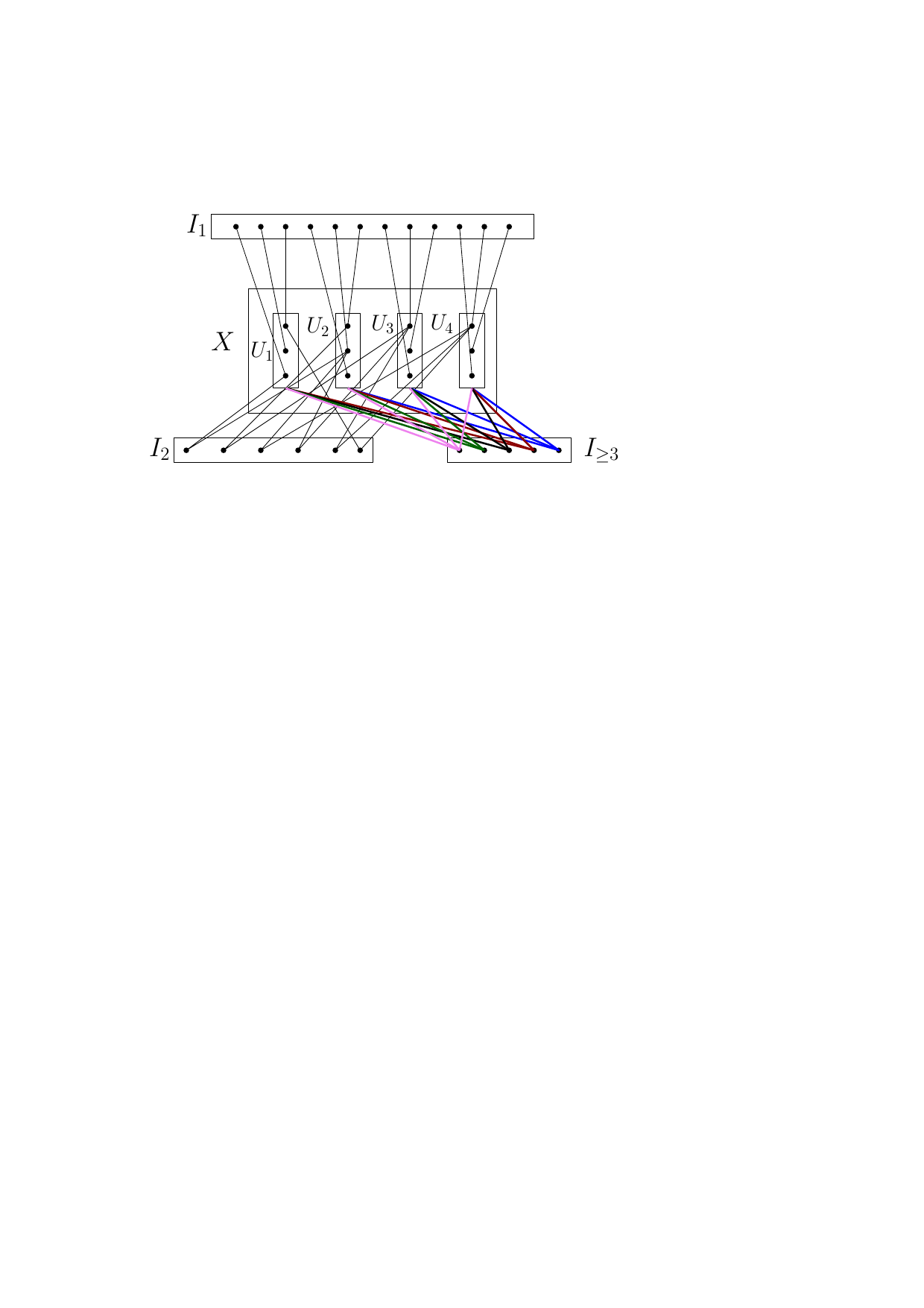}
    \caption{The graph $G$ constructed from $G'$, where $X$ (vertex cover of $G$) and $Y= I_1\cup I_2 \cup I_{\geq 3}$ are independent sets. An edge between $w\in I_{\geq 3}$ and $U_i$ signifies that each vertex in $U_i$ is adjacent to $w$.}
    \label{fig:reduction}
\end{figure}

\begin{lemma}\label{L:oneSide}
    If $G'$ admits a proper 3-coloring, then $G$ contains a shattered set of size $k$.
\end{lemma}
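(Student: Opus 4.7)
The plan is to extract the shattered set directly from a proper 3-coloring $c$ of $G'$. For each $i \in [k]$, the restriction $c|_{V_i}$ is a proper 3-coloring of $G'[V_i]$, so the construction of $U_i$ provides a vertex $u_i \in U_i$ representing it. I set $S := \{u_1, \ldots, u_k\}$, which has size exactly $k$ because the $U_i$'s are pairwise disjoint, and I argue that $S$ is shattered.

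To verify this, I would fix an arbitrary $T \subseteq S$ and let $A := \{i \in [k] : u_i \in T\}$; I then exhibit a witness $w \in V(G)$ with $N(w) \cap S = T$, by a case analysis mirroring the three layers $I_1, I_2, I_{\geq 3}$ of $Y$. If $|A| \geq 3$, the vertex of $I_{\geq 3}$ assigned to $A$ is adjacent to every vertex of $\bigcup_{j\in A}U_j$, and since $U_j \cap S = \{u_j\}$ for every $j$, its neighborhood meets $S$ in exactly $\{u_j : j \in A\} = T$. If $A = \{i\}$, the vertex of $I_1$ attached to $u_i$ has neighborhood $\{u_i\} = T$. The substantive case is $|A| = 2$, say $A = \{i,j\}$: here I would use that $u_i$ and $u_j$ are consistent, because $c|_{V_i} \cup c|_{V_j}$ is the restriction of $c$ to $V_i \cup V_j$ and is therefore a proper 3-coloring of $G'[V_i \cup V_j]$; hence the vertex of $I_2$ corresponding to the pair $(u_i, u_j)$ is present in $G$ and has neighborhood $\{u_i, u_j\} = T$.

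For the remaining case $|A| = 0$, I would observe that every edge added during the construction has one endpoint in $X$ and the other in $Y$, so $X$ is an independent set; any vertex of $X$, for instance $u_1 \in S$ itself, therefore satisfies $N(u_1) \cap S \subseteq N(u_1) \cap X = \emptyset = T$. I do not foresee any real obstacle in this direction of the reduction: the three layers of $Y$ were tailored precisely to provide witnesses for subsets of sizes $1$, $2$, and $\geq 3$, while the bipartite structure handles the empty subset for free, and the only place where the hypothesis that $c$ is a \emph{global} proper coloring of $G'$ is actually used is the consistency check in the $|A| = 2$ case.
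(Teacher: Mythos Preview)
Your proof is correct and follows essentially the same approach as the paper: pick the $u_i \in U_i$ corresponding to the restriction of the global 3-coloring, and witness each subset of $S$ via $I_1$, $I_2$, or $I_{\geq 3}$ according to its size, using pairwise consistency for the $|A|=2$ case. The only cosmetic difference is in the empty-subset case: the paper invokes $|X|>k$ to find a pendant vertex in $I_1$ whose unique neighbor lies in $X\setminus S$, whereas you observe (more simply) that $X$ is independent and use any vertex of $X$ as the witness.
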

\begin{proof}
    If $G'$ admits a proper 3-coloring, then there exist $u_1,\ldots, u_k \in V(G)$ such that $u_i\in U_i$ for $i\in [k]$ and $u_1,\ldots,u_k$ are mutually consistent. We claim that $S=\{u_1,\ldots,u_k\}$ is a shattered set in $G$. First, by the definition of $I_1$, for each $u_i$, there is a unique vertex $w_i\in I_1$ such that $N(w_i) = u_i$. Second, by the definition of $I_2$ and the fact that the vertices in $S$ are mutually consistent, for every two distinct vertices $u_i,u_j$ in $S$, there is a unique vertex $w\in I_2$ such that $N(w) = \set{u_i,u_j}$. Third, by the definition of $I_{\geq 3}$ and the fact that each $u_i$ comes from a distinct $U_i$, for every subset $A\subseteq S$ such that $|A| \geq 3$, there is a unique vertex $w \in I_{\geq 3}$ such that $N_S(w) = A$. Finally, since we can assume that $|X| > k$ (as $k\leq |V(G')|$ and $X$ encodes all possible 3-colorings of $|V(G')|$ vertices), there is at least one vertex $w\in I_1$ such that $N_S(w) = \emptyset$. Thus, $S$ is a shattered set of size $k$ in $G$.
\end{proof}

\begin{lemma}\label{L:otherSide}
    If $G$ contains a shattered set of size at least $k$, then $G'$ admits a proper 3-coloring. 
\end{lemma}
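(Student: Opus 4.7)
The plan is to reverse-engineer a proper 3-coloring of $G'$ from a shattered set $S$ of size at least $k$ in $G$. Since $G'$ is partitioned into $V_1, \ldots, V_k$ with each $U_i$ enumerating the valid 3-colorings of $G'[V_i]$, it suffices to extract one vertex $u_i \in U_i$ per part such that the associated colorings are pairwise consistent; this yields a proper 3-coloring of $G'$. We may assume $|S|=k$, as every subset of a shattered set is shattered, and the argument applies for $k$ at least some constant threshold, which suffices for the asymptotic \ETH-based lower bound.

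First, I would observe that $G$ is bipartite between $X$ and $Y$ (since $X = \bigcup_i U_i$ is independent and $Y$ is independent by construction). If $S$ contained both $x \in X$ and $y \in Y$, then $\{x,y\}$ would require a witness in $N(x) \cap N(y) \subseteq Y \cap X = \emptyset$, so $S \subseteq X$ or $S \subseteq Y$. A counting argument rules out $S \subseteq Y$: every non-empty subset of $S$ would need a witness from $X$ (as $Y$ is independent), yielding at most $|X|+1 \leq k \cdot 3^{\lceil 1/\epsilon_1 \rceil}+1$ distinct witnessed subsets, which falls short of the required $2^k$ for $k$ large enough. Hence $S \subseteq X$.

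The crucial step is proving $|S \cap U_i| \leq 1$ for each $i$. Suppose $a, b \in S \cap U_i$ are distinct, and consider $U' := (S \setminus U_i) \cup \{a\}$, which contains $a$ but not $b$. In the main case $|U'| \geq 3$, only a vertex in $I_{\geq 3}$ can witness $U'$, since $I_1$ and $I_2$ vertices have degrees $1$ and $2$, respectively. Any such witness $w_A$ satisfies $N(w_A) \cap S = \bigcup_{j \in A}(S \cap U_j)$, which is a union of entire sets $S \cap U_j$. Since $a \in U_i$ and the $U_j$ are disjoint, $i \in A$, so $b \in S \cap U_i \subseteq U'$, a contradiction. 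The degenerate cases $|U'| < 3$ (when $|S \cap U_i|$ is close to $k$) are handled by choosing a 2-subset of $S \cap U_i$ and showing it cannot be witnessed at all: $I_1$ and $I_2$ vertices fail by degree, and any $I_{\geq 3}$ witness would have neighborhood equal to a union of full sets $S \cap U_j$, which can never equal a strict subset of $S \cap U_i$.

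With $|S \cap U_i| \leq 1$ for each $i$ and $|S| = k$, we obtain $S = \{u_1, \ldots, u_k\}$ with $u_i \in U_i$. For any $i \neq j$, the pair $\{u_i, u_j\}$ needs a witness. An $I_{\geq 3}$ option would require $A \supseteq \{i,j\}$ with $A \setminus \{i,j\}$ indexing empty $S \cap U_\ell$, but all such intersections are singletons, so no $A$ with $|A| \geq 3$ suffices. Thus the witness lies in $I_2$, which exists iff $u_i$ and $u_j$ are consistent. Pairwise consistency of $u_1, \ldots, u_k$ yields a proper 3-coloring of $G'$: every edge of $G'$ either lies inside some $V_i$ (properly colored by $u_i$) or crosses from $V_i$ to $V_j$ (properly colored by the consistency of $u_i$ and $u_j$). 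The main obstacle is the case analysis for $|S \cap U_i| \leq 1$, where the ``union of full $S \cap U_j$'s'' property of $I_{\geq 3}$ neighborhoods must be used carefully to rule out all potential witnesses for strict subsets of some $S \cap U_i$, covering both the generic case $|U'| \geq 3$ and the degenerate boundary cases.
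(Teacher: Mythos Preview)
Your proof is correct and relies on the same structural facts as the paper's (bipartiteness forces $S\subseteq X$ or $S\subseteq Y$; $I_1,I_2$ have degree $\le 2$; $I_2$-vertices never join two vertices of the same $U_i$; and $N(w_A)\cap S$ is always a union of full blocks $S\cap U_j$). The organization differs: the paper reaches $|S\cap U_i|=1$ through a chain of three claims ($|S\cap U_i|<3$, then at most one block of size $2$, then none), whereas you go there in one shot by considering $U'=(S\setminus U_i)\cup\{a\}$ and splitting on $|U'|$; this is a cleaner packaging of the same ideas, though your ``strict subset'' phrasing in the degenerate case only literally applies once $|S\cap U_i|\ge 3$ (the conclusion still holds for the remaining small cases via the $|A|\ge 3$ constraint, and your threshold assumption on $k$ covers this anyway).
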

\begin{proof}
    For clarity, we divide this proof into multiple claims. Let $S$ be a shattered set of $G$ such that $|S| \geq k$. Since $G$ is a bipartite graph with bipartition $X\cup Y$, either $S \subseteq X$ or $S\subseteq Y= I_1\cup I_2\cup I_{\geq 3}$.

    \begin{claim}\label{C:SinX}
        $S \subseteq X$.
    \end{claim}
    \begin{proofofclaim}
        Toward a contradiction, let $S\subseteq Y$. 
        For all $v \in S$, the degree of $v$ is at least $2^{k-1}$, and thus, $v\in I_{\ge 3}$ as the degree of each vertex in $I_1$ and $I_2$ is at most $2$.
        Recall that if a vertex $v\in I_{\geq 3}$ is adjacent to some vertex in $U_i$, then it is adjacent to each vertex in $U_i$. Hence, any minimal set of witnesses for $S$ contains at most one vertex from each $U_i$. Thus, $|S| \leq \log k <k$, a contradiction. 
    \end{proofofclaim}

    \begin{claim}\label{C:Hsize1}
        For all $i\in [k]$, $|S\cap U_i| <3$.
    \end{claim}
    \begin{proofofclaim}
        Toward a contradiction, suppose that there exists an $i\in [k]$ such that $|S\cap U_i| \geq 3$. Let $x,y,z \in S\cap U_i$ be distinct. As they are in $S$, for each subset $A$ of $\set{x,y,z}$, there is a vertex $w\in Y$ (as $X$ is an independent set) such that $N_S(w) = A$. Thus, there is a vertex $w\in Y$ such that $x,y\in N(w)$, but $z\notin N(w)$. As each vertex in $I_1$ has degree 1, and $w$ has degree at least $2$, then $w\notin I_1$. As each vertex in $I_2$ cannot be adjacent to two vertices in the same $U_i$, then $w\notin I_2$. If a vertex in $I_{\geq 3}$ is adjacent to some vertex in $U_i$, then it is adjacent to each vertex in $U_i$, and hence, it is not possible that $w\in I_{\geq 3}$ either. Thus, we have a contradiction. 
    \end{proofofclaim}
    
    \begin{claim}\label{C:Hsize}
        For distinct $i,j\in [k]$, it is not possible that $|S\cap U_i| >1$ and $|S\cap U_j| >1$.
    \end{claim}
    \begin{proofofclaim}
        Toward a contradiction, assume that there are distinct $i,j\in [k]$ such that $|S\cap U_i| >1$ and $|S\cap U_j| >1$. Let $a,b \in S\cap U_i$ and let $x,y \in S\cap U_j$. Then, there is $w\in Y$ such that $a,b,x\in N(w)$, but $y \notin N(w)$. As in the proof of Claim~\ref{C:Hsize1}, $w\notin I_1 \cup I_2$ due to the degree arguments, and it is not possible that $x\in N(w)$ but $y\notin N(w)$ if $w\in I_{\geq 3}$. Thus, we have a contradiction. 
    \end{proofofclaim}

    By Claim~\ref{C:SinX}, $S\subseteq X$, and by Claims~\ref{C:Hsize1} and \ref{C:Hsize}, either $S$ contains exactly one vertex from each $U_i$ (for $i\in [k]$) or there is at most one $j\in [k]$ such that $|S\cap U_j| =2$, and, for all $i\in [k]$ such that $i\neq j$, $|S\cap U_i| \leq 1$. Next, we show that the latter case is not possible.
    \begin{claim}\label{C:HFinal}
        For all $i\in [k]$, $|S\cap U_i| = 1$. 
    \end{claim}
    \begin{proofofclaim}
        Toward a contradiction, suppose that there is a unique $j\in [k]$ such that $|S\cap U_j| =2$, and, for all $i\in [k]$ such that $i\neq j$, $|S\cap U_i| \leq 1$. Note that there is at most one $U_{\ell}$, $\ell \in[k]$, such that $|S\cap U_{\ell}| = 0$. Let $x,y\in S\cap U_j$ be distinct. Then, there is a vertex $w\in Y$ such that $N_S(w) = \set{x,y}$. As in the proof of Claim~\ref{C:Hsize1}, $w\notin I_1$ due to its degree, and $w\notin I_2$ as a vertex in $I_2$ cannot be adjacent to two vertices from the same $U_j$. Each vertex in $I_{\geq 3}$ is adjacent to each vertex of at least three sets in $\{U_1,\ldots,U_k\}$, and hence, to each vertex of at least two sets in $\set{U_1,\ldots,U_k}\setminus \{U_{\ell}\}$. Thus, if $x,y\in N_S(w)$, then there is at least one more vertex $v\in U_i\cap S$ where $i\in [k] \setminus \set{j,\ell}$ such that $v\in N_S(w)$. Hence, $N_S(w) \neq \set{x,y}$, as $v$ is distinct from $x$ and $y$. Thus, we have a contradiction.
    \end{proofofclaim}

    By Claim~\ref{C:HFinal}, $S$ contains exactly one vertex from each $U_i$. To complete our proof, we show that the vertices in $S$ are mutually consistent. Toward a contradiction, assume that there are distinct $x,y\in S$ such that $x$ and $y$ are not consistent. Let $x\in U_i$ and $y\in U_j$ for distinct $i,j\in [k]$. As $x,y \in S$, there exists $w\in Y$ such that $N_S(w) = \set{x,y}$. As each vertex in $I_1$ has degree~$1$, $w\notin I_1$. Since each vertex in $I_{\geq 3}$ is adjacent to each vertex of at least three sets~in $\set{U_1,\ldots,U_k}$, and $S$ contains at least one vertex from each $U_i$, we have that $w\notin I_{\geq 3}$. So, $w\in I_2$. By~the definition of $I_2$, if $w$ is adjacent to $x$ and $y$, then $x$ and $y$ are consistent. Thus, all vertices in $S$ are mutually consistent, and so, a coloring corresponding to these vertices is a proper $3$-coloring of~$G'$.   
\end{proof}

\begin{lemma}\label{L:ETH}
    There exists a constant $\epsilon>0$ such that, if the instance $(G,k)$ of \GRVCD is solvable in $2^{\epsilon \cdot \vcn}\cdot |V(G)|^{\bO(1)}$ time, then the instance $G'$ of \textsc{3-Coloring} is solvable in $2^{\epsilon_c\cdot|V(G')|}\cdot |V(G')|^{\bO(1)}$~time. 
\end{lemma}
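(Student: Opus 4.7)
The plan is to compose the reduction above with the hypothetical $2^{\epsilon \cdot \vcn}\cdot |V(G)|^{\bO(1)}$-time algorithm for \GRVCD --- using \Cref{L:oneSide,L:otherSide} for correctness --- and then to pick the reduction parameter $\epsilon_1 \in (0,1)$ together with $\epsilon$ so that the resulting \textsc{3-Coloring} algorithm runs in time strictly below the $2^{\epsilon_c \cdot |V(G')|}\cdot |V(G')|^{\bO(1)}$ lower bound of \Cref{P:3-color}, contradicting the \ETH. Write $n := |V(G')|$.

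The main preparation is to bound the sizes produced by the reduction in terms of $n$. Since $|U_i| \leq 3^p$ for each $i \in [k]$ with $p = \lceil 1/\epsilon_1 \rceil$, and $X = \bigcup_{i \in [k]} U_i$ is a vertex cover of $G$, we have $\vcn(G) \leq |X| \leq k \cdot 3^p \leq 3^p(\epsilon_1 n + 1)$. For the total vertex count, $|I_1| = |X|$ and $|I_2| \leq \binom{k}{2}(3^p)^2$ are polynomial in $n$ once $\epsilon_1$ is fixed, whereas $|I_{\geq 3}| \leq 2^k \leq 2 \cdot 2^{\epsilon_1 n}$ dominates for large $n$. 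Hence $|V(G)| \leq 2^{\epsilon_1 n + c_3}$ for some constant $c_3$ depending only on $\epsilon_1$, and the reduction itself runs in time polynomial in $|V(G)|$.

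Writing $c_1 := 3^p$ and letting $c_2$ denote the absolute constant hidden in the $|V(G)|^{\bO(1)}$ factor of the assumed \GRVCD algorithm, the composed procedure decides \textsc{3-Coloring} on $G'$ in time at most
\[
2^{\epsilon \cdot \vcn(G)} \cdot |V(G)|^{c_2} \;\leq\; 2^{(\epsilon\, c_1\, \epsilon_1 \,+\, c_2\, \epsilon_1)\, n \,+\, c_4},
\]
where $c_4$ is another constant depending only on $\epsilon_1$. I would then fix $\epsilon_1$ small enough that $c_2\, \epsilon_1 \leq \epsilon_c/2$ (which in turn fixes $c_1$ and $c_4$), and set $\epsilon := \epsilon_c / (2\, c_1\, \epsilon_1)$, so that $\epsilon\, c_1\, \epsilon_1 \leq \epsilon_c/2$. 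The exponent is then at most $\epsilon_c \cdot n + c_4$, and the total running time fits inside $2^{\epsilon_c n}\cdot n^{\bO(1)}$, contradicting \Cref{P:3-color}.

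The one delicate point is that $\epsilon_1$ cannot be shrunk independently of $\epsilon$, since $c_1 = 3^{\lceil 1/\epsilon_1 \rceil}$ blows up as $\epsilon_1 \to 0$ and is multiplied by $\epsilon$ in the exponent. The argument works precisely because $\epsilon_1$ is fixed \emph{first} (as a function of $\epsilon_c$ and the universal constant $c_2$), after which $c_1$ is a fixed constant that $\epsilon$ can be chosen small enough to absorb.
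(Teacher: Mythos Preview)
Your argument is correct and follows essentially the same approach as the paper's proof: bound $\vcn(G)\le |X|\le 3^{\lceil 1/\epsilon_1\rceil}\cdot k$, bound $|V(G)|$ by $2^{\epsilon_1 n + \bO(1)}$ (dominated by $|I_{\ge 3}|$), compose with the hypothetical algorithm, and then choose $\epsilon_1$ small enough to control the $|V(G)|^{c_2}$ term before choosing $\epsilon$ to absorb the $c_1=3^{\lceil 1/\epsilon_1\rceil}$ factor. Two cosmetic remarks: your constant $c_4$ actually depends on $\epsilon$ as well (via the $\epsilon c_1$ additive term from $\vcn(G)\le c_1\epsilon_1 n + c_1$), though this is harmless since $\epsilon$ is fixed downstream; and the phrase ``contradicting \Cref{P:3-color}'' overshoots the lemma, which only asserts the implication --- the contradiction is drawn later in \Cref{T:vcn}.
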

\begin{proof}
    Note that $|X| \leq 3^p\cdot k = 3^{\lceil \frac{1}{\epsilon_1}\rceil} \lceil  \epsilon_1 \cdot |V(G')| \rceil$, $|I_1| = |X|$, $|I_2| \leq |X|^2$, and $|I_{\geq 3}| = {k \choose \geq 3}\leq 2^k$.
    Thus, $|V(G)| = |X|+ |I_1|+|I_2|+|I_{\geq3}| \leq 2|X|+|X|^2+2^k \leq 3|X|^2+2^k$. Further, we can assume that $3 (3^{\lceil \frac{1}{\epsilon_1}\rceil} \cdot \lceil  \epsilon_1 \cdot |V(G')| \rceil)^2 \leq 2^{\lceil \epsilon_1 \cdot |V(G')| \rceil}$, as otherwise $|V(G')|$ is bounded by a constant and we can solve the instance of \textsc{3-Coloring} by brute force. So, $|V(G)| \leq 2\cdot 2^{\lceil \epsilon_1 \cdot |V(G')|\rceil} \leq 2^{\epsilon_1 |V(G')| +2}$.
    
    First, we construct $(G,k)$ from $G'$ in $|V(G)|^{\bO(1)}$ time. As the instances $(G,k)$ of \GVCD and $G'$ of \textsc{3-Coloring} are equivalent (by~\Cref{L:oneSide,L:otherSide}), then solving $(G,k)$ solves $G'$ too. Assume that we can solve $(G,k)$ in $2^{\epsilon \cdot \vcn} |V(G)|^{\bO(1)}$ time, and thus, can solve \textsc{3-Coloring} on $G'$ in $2^{\epsilon \cdot \vcn} |V(G)|^{\bO(1)} + |V(G)|^{\bO(1)}$ time. 
    There exists a constant $c>1$ such that $2^{\epsilon \cdot \vcn} |V(G)|^{\bO(1)} + |V(G)|^{\bO(1)} \leq 2^{\epsilon \cdot \vcn} |V(G)|^c$. Recall that $X$ is a vertex cover of $G$. Thus, our running time is at most $2^{\epsilon \cdot 3^{\lceil \frac{1}{\epsilon_1}\rceil}\lceil \epsilon_1 \cdot |V(G')|\rceil} \cdot 2^{c(\epsilon_1|V(G')|+2)}$. Fix $\epsilon = \frac{1}{3^{\lceil \frac{1}{\epsilon_1}\rceil}}$. We can safely assume that $\lceil \epsilon_1 |V(G')| \rceil \leq 2\epsilon_1 |V(G')|$ and $\epsilon_1|V(G')|+2 \leq 2\epsilon_1 |V(G')|$. Replacing these values (including the value of $\epsilon$) in our running time, it is at most $2^{2\epsilon_1\cdot |V(G')|}\cdot 2^{2c\epsilon_1\cdot |V(G')|} = 2^{|V(G')|(2\epsilon_1+2c\epsilon_1)}$. Since $c>1$, our running time is at most  $2^{|V(G')|(4c\epsilon_1)}$. For any such constant $c$, we can set $0<\epsilon_1<1$ such that $4c\epsilon_1 <\epsilon_c$. Thus, if we can solve $(G,k)$ in $2^{\epsilon \vcn}\cdot |V(G)|^{\bO(1)}$ time, then we can solve \textsc{3-Coloring} on $G'$ in $2^{\epsilon_c\cdot |V(G')|}\cdot |V(G')|^{\bO(1)}$ time. 
\end{proof}

Finally, the following theorem is a consequence of our reduction, Lemmas~\ref{L:oneSide},~\ref{L:otherSide}, and~\ref{L:ETH}, and \Cref{P:3-color}. Also, since $k\leq \vcn$ in our construction, the result for \GRVCD holds for the combined parameter $k+\vcn$. Moreover, the lower bound for \VCD follows since, from our instance $(G,k)$ of \GRVCD (recall that $G$ is a bipartite graph with bipartition $X\cup Y$), we can create an equivalent instance ($\mathcal{H}=(\mathcal{V},\mathcal{E}),k)$ of \VCD in polynomial time as follows: set $\mathcal{V}=X$ and, for all $y\in Y$, create a hyperedge containing $N(Y)$. 
\begin{theorem}\label{T:vcn}
    Assuming the \ETH, there exists a constant $\epsilon>0$ such that the following statements hold: (i) \GRVCD does not admit an algorithm running in $2^{\epsilon (\vcn+k)} (|V|)^{\bO(1)}$ time, and (ii)~\VCD does not admit an algorithm running in $2^{\epsilon |\mathcal{V}|}\cdot  |\mathcal{H}|^{\bO(1)}$ time. 
\end{theorem}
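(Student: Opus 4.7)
The plan is to assemble the pieces already in place and then extend slightly to handle both parts with a single constant $\epsilon$. The reduction from \textsc{3-Coloring} together with \Cref{L:oneSide,L:otherSide} establishes that $G'$ admits a proper $3$-coloring if and only if $G$ contains a shattered set of size $k$. \Cref{L:ETH} then converts any algorithm for \GRVCD of running time $2^{\epsilon\cdot \vcn}|V(G)|^{\bO(1)}$ (for a suitable small constant $\epsilon$) into one for \textsc{3-Coloring} of running time $2^{\epsilon_c|V(G')|}|V(G')|^{\bO(1)}$, contradicting \Cref{P:3-color} under the \ETH.

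For part (i), I would observe that in our reduction $k=\lceil \epsilon_1|V(G')|\rceil$ while $\vcn\leq |X|\leq 3^{\lceil 1/\epsilon_1\rceil}k$, so $\vcn+k\leq (1+3^{\lceil 1/\epsilon_1\rceil})\,\vcn$. Hence an algorithm running in $2^{\epsilon(\vcn+k)}|V|^{\bO(1)}$ also runs in $2^{\epsilon'\vcn}|V|^{\bO(1)}$ for $\epsilon'=(1+3^{\lceil 1/\epsilon_1\rceil})\epsilon$; applying \Cref{L:ETH} with $\epsilon'$ in place of $\epsilon$ yields the contradiction, and choosing $\epsilon$ small enough relative to the constant from \Cref{L:ETH} finishes part (i).

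For part (ii), I would exploit the bipartite structure $X\cup Y$ of $G$ and build the hypergraph $\mathcal{H}=(\mathcal{V},\mathcal{E})$ with $\mathcal{V}:=X$ and one hyperedge $N(y)$ for every $y\in Y$. By \Cref{C:SinX}, every shattered set of $G$ lies in $X$, and since each $y\in Y$ is an independent ``witness'' vertex whose open neighborhood is precisely the corresponding hyperedge, a set $S\subseteq X$ is shattered in $G$ if and only if it is shattered in $\mathcal{H}$. Moreover $|\mathcal{V}|=|X|\leq 3^{\lceil 1/\epsilon_1\rceil}k = \bO(\vcn)$ and $|\mathcal{H}|\leq |V(G)|^{\bO(1)}$. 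Any algorithm for \VCD of running time $2^{\epsilon|\mathcal{V}|}|\mathcal{H}|^{\bO(1)}$ then gives one for \GRVCD of running time $2^{\bO(\epsilon\vcn)}|V|^{\bO(1)}$, contradicting part (i) for $\epsilon$ small enough.

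The only real obstacle is the constant juggling: one must check that the same $\epsilon$ can be chosen for both statements, and that the various inequalities ($\vcn\leq 3^{\lceil 1/\epsilon_1\rceil}k$, $|\mathcal{V}|=\bO(\vcn)$, polynomial-time construction of $\mathcal{H}$ from $G$) compose cleanly through \Cref{L:ETH} and \Cref{P:3-color}. No new combinatorial content is required beyond the reduction and the equivalences already proved.
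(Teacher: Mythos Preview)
Your approach matches the paper's: assemble \Cref{L:oneSide,L:otherSide,L:ETH} with \Cref{P:3-color} for part~(i), and pass to the hypergraph $\mathcal{H}=(X,\{N(y):y\in Y\})$ for part~(ii). There is, however, a small but genuine slip in your inequality for part~(i). From $\vcn\leq |X|\leq 3^{\lceil 1/\epsilon_1\rceil}k$ you \emph{cannot} deduce $\vcn+k\leq (1+3^{\lceil 1/\epsilon_1\rceil})\,\vcn$; that inequality would require $k\leq 3^{\lceil 1/\epsilon_1\rceil}\vcn$, which is the opposite direction. What you actually need (and what the paper uses) is the simpler bound $k\leq \vcn$, giving $\vcn+k\leq 2\vcn$. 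This holds because the $k$ vertices $u_1\in U_1,\dots,u_k\in U_k$ together with their private degree-$1$ neighbors in $I_1$ form a matching of size $k$, so $\vcn\geq k$ by K\H{o}nig's theorem. The same unstated fact underlies your claim $3^{\lceil 1/\epsilon_1\rceil}k=\bO(\vcn)$ in part~(ii). Once you insert $k\leq\vcn$ explicitly, both parts go through exactly as in the paper.
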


\section{Parameterized Complexity of \VCD}
In this section, we first provide an \FPT\ $1$-additive approximation algorithm for \VCD parameterized by the maximum degree $\Delta:= \Delta(\mathcal{H)}$ of $\mathcal{H}$. As noted before, \VCD~is \FPT\ parameterized by the dimension of $\mathcal{H}$ and \W[1]-hard parameterized by the solution size $k$~\cite{DBLP:conf/colt/DowneyEF93} and the degeneracy of $\mathcal{H}$~\cite{DBLP:conf/iwpec/DrangeGMR23}. We then cover the remaining well-studied structural hypergraph parameters by proving that \VCD is \textsf{LogNP}-hard, even if $\mathcal{H}$ is a hypertree with transversal number~$1$. 

Toward the approximation algorithm, we observe a relationship between a shattered set and its witness in $\mathcal{H}=(\mathcal{V},\mathcal{E})$. Let $S:= \set{v_1,\ldots,v_k} \subseteq \mathcal{V}$ be a shattered set of $\mathcal{H}$ of size $k$, and $W \subseteq \mathcal{E}$ a witness of $S$. Note that $|W| = 2^{k}$ and, for each subset $S'\subseteq S$, there exists a unique edge $e\in W$ such that $e\cap S = S'$. In other words, for each $A\subseteq [k]$, there exists an edge $e\in W$ such that  $v_i \in e$ if and only if $i\in A$. Thus, there exists an ordering  $(w_0,\ldots,w_{2^k-1})$ of the edges of $W$ such that $v_i \in w_j$ (for $i\in [k]$ and $j\in [0,2^k-1]$)
if and only if the $i^\text{th}$ least significant bit of the binary representation of $j$ is $1$. We call such an ordering a \textit{good ordering}.  
We observe a useful property of this relationship. 

\begin{lemma}\label{L:SubproblemDelta}
    Given a hypergraph $\mathcal{H}=(\mathcal{V},\mathcal{E})$ and $W \subseteq \mathcal{E}$ with $|W| = 2^k$, one can decide whether there exists a shattered set $S\subseteq \mathcal{V}$ such that $|S|=k$ and $W$ is a witness of $S$ in $|W|^{|W|}\cdot|\mathcal{H}|^{\bO(1)}$~time.
\end{lemma}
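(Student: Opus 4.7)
The plan is to exploit the good-ordering observation established just before the lemma. The observation says that if $W$ is a witness of a shattered set $S=\{v_1,\ldots,v_k\}$ of size $k$, then there is a bijection $\pi:[0,2^k-1]\to W$, $\pi(j)=w_j$, such that $v_i\in w_j$ if and only if the $i$-th least significant bit of $j$ equals $1$. Thus, once we fix such an ordering of $W$, the vertex $v_i$ (if it exists) is completely determined by the pattern it must exhibit on $W$: for each $i\in [k]$, writing $J_i=\{j\in[0,2^k-1] : \text{bit } i{-}1 \text{ of } j \text{ is } 1\}$, the vertex $v_i$ must belong to precisely the edges $\{w_j : j\in J_i\}$ among those in $W$. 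This reduces the search to a pure enumeration of orderings followed by a per-coordinate vertex lookup.

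Concretely, the algorithm I would use is: enumerate every bijection $\pi:[0,2^k-1]\to W$ (there are $|W|!\le |W|^{|W|}$ of them); for each such ordering $\pi=(w_0,\ldots,w_{2^k-1})$ and each $i\in[k]$, scan $\mathcal{V}$ for a vertex $v_i$ satisfying $\inc(v_i)\cap W=\{w_j:j\in J_i\}$; if such $v_1,\ldots,v_k$ all exist for the current ordering, output $S=\{v_1,\ldots,v_k\}$. If no ordering succeeds, output \textsc{No}. Each ordering is processed in $|\mathcal{H}|^{\bO(1)}$ time (at most $k\cdot|\mathcal{V}|$ membership checks, each running in $|\mathcal{H}|^{\bO(1)}$ time), yielding total time $|W|!\cdot|\mathcal{H}|^{\bO(1)}\le |W|^{|W|}\cdot|\mathcal{H}|^{\bO(1)}$.

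For correctness, the forward direction is immediate from the good-ordering observation: any shattered set $S$ with witness $W$ corresponds to at least one ordering enumerated by the algorithm, on which the required vertices $v_i$ exist (namely the elements of $S$ themselves). For the reverse direction, suppose some enumerated ordering $\pi$ yields vertices $v_1,\ldots,v_k$ with the prescribed incidence pattern; then for every $A\subseteq[k]$, the edge $w_{j(A)}$ with $j(A)=\sum_{i\in A} 2^{i-1}$ satisfies $w_{j(A)}\cap\{v_1,\ldots,v_k\}=\{v_i:i\in A\}$ by construction, so $S=\{v_1,\ldots,v_k\}$ is shattered with $W$ as a witness; distinctness of the $v_i$ follows because their incidence patterns on $W$ are pairwise distinct (the sets $J_i$ differ).

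I do not anticipate a genuine obstacle here: the only thing to be careful about is making sure that checking "$v_i$ is incident to exactly the edges $\{w_j : j\in J_i\}$ of $W$" is done in polynomial time (it is, since one can precompute $\inc(v)\cap W$ for all $v\in\mathcal{V}$ in $|\mathcal{H}|^{\bO(1)}$ time) and that the factorial bound $|W|!\le|W|^{|W|}$ is applied to absorb the ordering enumeration cost. No additional minimality check for $W$ is required, because once the $v_i$ witness all $2^k$ subsets of $S$ via distinct edges of $W$ with $|W|=2^k$, the set $W$ is automatically a minimal shattering set, hence a witness.
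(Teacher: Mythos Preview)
Your proposal is correct and essentially identical to the paper's proof: both enumerate all orderings of $W$ and, for each ordering, check for each $i\in[k]$ whether some vertex has the prescribed incidence pattern on $W$. Your write-up is in fact slightly more careful than the paper's, as you explicitly argue distinctness of the $v_i$ and the automatic minimality of $W$.
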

\begin{proof}
    The algorithm is the following. Enumerate all possible orderings of $W$ and, for each ordering $(w_0,\ldots, w_{2^k-1})$ of $W$, decide whether it is a \textit{good ordering} for some subset $S\subseteq \mathcal{V}$ as follows: if, for each $i\in [k]$, there exists at least one vertex $v\in \mathcal{V}$ such that $w_j\in \inc(v)\cap W$ if and only if the $i^{\text{th}}$ least significant bit of $j$ is $1$, then $W$ is a good ordering. From a good ordering, one can easily extract a shattered set $S$ which contains exactly one vertex satisfying the condition above for each $i\in [k]$. Since all possible orderings of $W$ can be enumerated in $\bO (|W|^{|W|})$ time, and, for each ordering, it can be checked in polynomial time whether it is a good ordering, then it can be decided whether $W$ witnesses some shattered set of size $k$ in $|W|^{|W|}\cdot |\mathcal{H}|^{\bO(1)}$ time.
\end{proof}

\begin{observation}\label{O:trivialMaxDegree}
    Let $S\subseteq \mathcal{V}$ be a non-trivial shattered set of a hypergraph $\mathcal{H}=(\mathcal{V},\mathcal{E})$. Then, for each vertex $v\in S$, there exists a subset of $\inc(v)$ that shatters $S\setminus \set{v}$.
\end{observation}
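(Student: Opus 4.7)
The plan is to prove this observation directly from the definition of a shattered set by selecting, for each subset of $S \setminus \{v\}$, a suitable edge through $v$ whose intersection with $S$ augments that subset by $v$.

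First, I would fix a vertex $v\in S$ and aim to construct an explicit subset $W_v \subseteq \inc(v)$ that shatters $S\setminus \{v\}$. To do so, I would iterate over all subsets $S' \subseteq S\setminus\{v\}$ and consider the ``lifted'' subset $S'' := S' \cup \{v\} \subseteq S$. Since $S$ is shattered by $\mathcal{H}$, the definition of shattering guarantees an edge $e_{S'}\in\mathcal{E}$ such that $e_{S'}\cap S = S''$. Because $v\in S''$, we have $v\in e_{S'}$, so $e_{S'} \in \inc(v)$. Moreover, $e_{S'}\cap (S\setminus\{v\}) = (e_{S'}\cap S)\setminus\{v\} = S''\setminus\{v\} = S'$.

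Next, I would set $W_v := \{e_{S'} : S'\subseteq S\setminus\{v\}\}$. By the previous step, $W_v \subseteq \inc(v)$, and for every $S'\subseteq S\setminus\{v\}$ there is an edge in $W_v$ (namely $e_{S'}$) whose intersection with $S\setminus\{v\}$ equals $S'$. This is exactly the definition of $W_v$ shattering $S\setminus\{v\}$, which concludes the argument.

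There is essentially no obstacle here: the statement is a one-step consequence of the definition of shattering, using the observation that restricting the witnessing edges of $S$ to those containing $v$ already witnesses all subsets of $S\setminus\{v\}$. The only subtle point is interpreting ``non-trivial''; I would take it simply to mean $S\neq\emptyset$ so that a vertex $v\in S$ exists, noting that the empty subset $\emptyset\subseteq S\setminus\{v\}$ is still handled correctly since $S'' = \{v\}$ and the shattering property of $S$ yields an edge containing $v$ but no other element of $S$.
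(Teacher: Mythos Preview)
Your proof is correct and follows essentially the same approach as the paper: for each $S'\subseteq S\setminus\{v\}$, use the edge witnessing $S'\cup\{v\}$ in $S$, observe it lies in $\inc(v)$, and conclude that these edges shatter $S\setminus\{v\}$. The paper's argument is slightly terser but otherwise identical.
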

\begin{proof}
    Let $W\subseteq \mathcal{E}$ be a shattering set of $S$, and let $S_v = S\setminus\set{v}$.  Then, for each subset $S'\subseteq S_v$, there is an edge $e\in W$ such that $S'\cup \set{v} = e\cap S$. Thus, for each subset $S'\subseteq S_v$, there exists an edge $e\in \inc(v)$ such that $e\cap S_v =S'$, and hence, $\inc(v)$ contains a subset that shatters $S_v$. 
\end{proof}

We now provide an \FPT\ $1$-additive approximation algorithm for \VCD parameterized by~$\Delta$. It applies the next algorithm at most $\log \Delta$ times, since the VC-dimension of $\mathcal{H}$ is at most~$\log \Delta+1$, as any vertex in a shattered set of size greater than $\log \Delta+1$ would need to be in more than $2^{\log \Delta+1-1}=\Delta$ hyperedges.

\begin{theorem}\label{T:Delta}
    \VCD admits a $2^{\bO(\Delta \log \Delta)}\cdot|\mathcal{H}|^{\bO(1)}$-time algorithm that either outputs a shattered set of size $k-1$ or certifies that there is no shattered set of size $k$ in $\mathcal{H}$.
\end{theorem}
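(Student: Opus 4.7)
The plan is to exploit Observation~\ref{O:trivialMaxDegree} to localize the search and then to delegate the combinatorial work to Lemma~\ref{L:SubproblemDelta}. The algorithm iterates over every vertex $v\in \mathcal{V}$ and, for each such $v$, over every subset $W\subseteq \inc(v)$ of size exactly $2^{k-1}$; for each candidate $W$, it invokes Lemma~\ref{L:SubproblemDelta} to decide whether $W$ is a witness of some shattered set of size $k-1$. If any such call succeeds, the algorithm returns the corresponding shattered set of size $k-1$; otherwise, it certifies that no shattered set of size $k$ exists. As a safeguard, whenever $2^{k-1}>\Delta$ (equivalently $k>\log \Delta+1$), the algorithm immediately outputs that no shattered set of size $k$ exists, since no vertex can then be incident to $2^{k-1}$ hyperedges.

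Correctness in the positive case is immediate from Lemma~\ref{L:SubproblemDelta}. For the negative case, suppose for contradiction that $\mathcal{H}$ admits a shattered set $S$ of size $k$ and pick any $v\in S$. Observation~\ref{O:trivialMaxDegree} guarantees a subset of $\inc(v)$ that shatters $S\setminus \set{v}$; any minimal such subset is a witness $W\subseteq \inc(v)$ of $S\setminus \set{v}$ and satisfies $|W|=2^{k-1}$. The corresponding iteration over $(v,W)$ would then invoke Lemma~\ref{L:SubproblemDelta} and successfully recover a shattered set of size $k-1$, contradicting that the algorithm returned no.

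For the running time, we may assume $2^{k-1}\leq \Delta$. There are at most $|\mathcal{V}|\leq |\mathcal{H}|$ choices of $v$ and at most $\binom{\Delta}{2^{k-1}}\leq 2^{\Delta}$ choices of $W$. Each call to Lemma~\ref{L:SubproblemDelta} costs $(2^{k-1})^{2^{k-1}}\cdot |\mathcal{H}|^{\bO(1)} \leq \Delta^{\Delta}\cdot |\mathcal{H}|^{\bO(1)} = 2^{\Delta \log \Delta}\cdot |\mathcal{H}|^{\bO(1)}$. Multiplying these bounds yields the claimed $2^{\bO(\Delta \log \Delta)}\cdot |\mathcal{H}|^{\bO(1)}$ total running time. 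The only genuinely nontrivial step in the argument is the use of Observation~\ref{O:trivialMaxDegree} to justify that a witness of $S\setminus \set{v}$ may be assumed to sit entirely inside $\inc(v)$; once this structural fact is accepted, the rest of the algorithm is a direct composition of the earlier lemma with a straightforward subset enumeration, and no branching on the choice of $v$ beyond the outer loop is needed.
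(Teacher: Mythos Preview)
Your proof is correct and follows essentially the same approach as the paper: enumerate all vertices $v$, all $2^{k-1}$-subsets $W\subseteq \inc(v)$, and apply Lemma~\ref{L:SubproblemDelta}, with Observation~\ref{O:trivialMaxDegree} guaranteeing soundness of the negative certificate. The explicit safeguard for $2^{k-1}>\Delta$ and the slightly more detailed contrapositive argument are cosmetic additions; the paper handles the former implicitly (no subset of the required size exists) and states the latter more tersely.
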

\begin{proof}
    If $\mathcal{H}=(\mathcal{V},\mathcal{E})$ contains a shattered set $S$ of size $k$, then by Observation~\ref{O:trivialMaxDegree}, there exists $v\in S \subseteq \mathcal{V}$ such that a subset of $\inc(v)$ shatters a set of size $k-1$. For all~$v\in \mathcal{V}$, we look at each subset $W\subseteq \inc(v)$ of size $2^{k-1}$ to decide whether $W$ witnesses a shattered set $S\subseteq \mathcal{V}$ of size $k-1$ using \Cref{L:SubproblemDelta}, and report $S$ if it exists. The correctness of the algorithm follows from \Cref{O:trivialMaxDegree} since, if there is a shattered set of size at least $k$, then our algorithm will report a shattered set of size $k-1$, and if our algorithm fails to find a shattered set of size $k-1$, then it would imply that there is no shattered set of size $k$ in $\mathcal{H}$. Finally, we analyze the running time. For all $v\in \mathcal{V}$, we look at all subsets of $\inc(v)$ of size $2^{k-1}$, and then, for each subset, we invoke the algorithm from \Cref{L:SubproblemDelta}. Since $|\inc(v)| \leq \Delta$ for each $v\in \mathcal{V}$, we consider at most $2^{\Delta}$ subsets of $\inc(v)$, and as the size of each of these subsets is at most $\Delta$, for each of  these subsets we need $\Delta^{\Delta}\cdot|\mathcal{H}|^{\bO(1)}$ time. Thus, the total running time is $2^{\Delta}\cdot \Delta^{\Delta} \cdot |\mathcal{H}|^{\bO(1)}=2^{\bO(\Delta \log \Delta)}\cdot |\mathcal{H}|^{\bO(1)}$.
\end{proof}

Theorem~\ref{T:Delta} also applies to \GVCD where the vertices in $X$ have maximum degree at most~$\Delta$. When all vertices in $X\cup Y$ have maximum degree~$\Delta$ (this applies for example to \GRVCD for input graphs of maximum degree~$\Delta$), then one can actually obtain an \FPT\ algorithm running in $2^{\bO(\Delta^2\log\Delta)}|V|^{\bO(1)}$ time~\cite[Theorem~15]{DBLP:journals/tcs/BazganFS19}. This can even be improved to $2^{\bO(\log^2\Delta)}|V|^{\bO(1)}$: observe that the solution must be contained in the neighborhood of some vertex in $Y$; hence it suffices to enumerate, for each such neighborhood (which is of size at most $\Delta$), each of its possible subsets of size $\log\Delta+1$ and check in polynomial time whether it is shattered.

However, the remaining well-known structural hypergraph parameters do not yield tractability:

\begin{proposition}\label{P:htw}
    \VCD is \textsf{LogNP}-hard, even if $\mathcal{H}$ is a hypertree with transversal number~1.
\end{proposition}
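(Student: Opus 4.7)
The plan is to give a polynomial-time reduction from the general \VCD problem (which is \textsf{LogNP}-hard by Papadimitriou and Yannakakis~\cite{DBLP:journals/jcss/PapadimitriouY96}) to \VCD restricted to hypertrees with transversal number $1$. Given an arbitrary instance $(\mathcal{H}=(\mathcal{V},\mathcal{E}),k)$, I would construct an instance $(\mathcal{H}'=(\mathcal{V}',\mathcal{E}'),k)$ as follows. Introduce a fresh vertex $v^*$, set $\mathcal{V}'=\mathcal{V}\cup\{v^*\}$, and for every $e\in\mathcal{E}$, put $e':=e\cup\{v^*\}$ into $\mathcal{E}'$. This is computable in polynomial time.

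Next, I would verify the two structural properties of $\mathcal{H}'$. For the transversal number, the single vertex $v^*$ is contained in every hyperedge of $\mathcal{H}'$ by construction, so $\{v^*\}$ is a transversal of size $1$. For the hypertree property, I would exhibit as the underlying tree $T$ the star on vertex set $\mathcal{V}'$ centred at $v^*$, whose leaves are the vertices of $\mathcal{V}$. Then each hyperedge $e'=e\cup\{v^*\}$ is exactly the vertex set of the substar of $T$ obtained by keeping $v^*$ together with the leaves in $e$, which is a subtree of $T$ (and this remains true for the edge cases $e=\emptyset$ and $|e|=1$).

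The heart of the argument is to check that the VC-dimension is preserved, i.e., $\mathcal{H}$ has a shattered set of size $\geq k$ if and only if $\mathcal{H}'$ does. The forward direction is immediate: if $S\subseteq\mathcal{V}$ is shattered in $\mathcal{H}$, then for every $S'\subseteq S$ the witness $e\in\mathcal{E}$ gives $e'\cap S=(e\cup\{v^*\})\cap S=e\cap S=S'$ since $v^*\notin S\subseteq\mathcal{V}$. Conversely, suppose $S\subseteq\mathcal{V}'$ is shattered in $\mathcal{H}'$ with $|S|\geq 1$. The key observation is that $v^*\notin S$: indeed, witnessing $\emptyset\subseteq S$ would require some $e'\in\mathcal{E}'$ with $e'\cap S=\emptyset$, but $v^*\in e'$ for every $e'\in\mathcal{E}'$, which would force $v^*\notin S$; so in any case $S\subseteq\mathcal{V}$. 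Then for each $S'\subseteq S$, the witnessing edge $e'=e\cup\{v^*\}$ satisfies $e\cap S=e'\cap S=S'$, so $S$ is shattered in $\mathcal{H}$.

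Combining these verifications with the \textsf{LogNP}-hardness of general \VCD yields the proposition. There is no serious obstacle here; the only points that require care are the boundary cases ($k=1$ and empty hyperedges), which the observation $v^*\notin S$ handles uniformly, and making sure that the star $T$ really witnesses the hypertree property for every possible shape of hyperedge in $\mathcal{E}$.
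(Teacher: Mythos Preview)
Your proposal is correct and follows essentially the same approach as the paper: add a universal vertex $v^*$ (the paper calls it $u$) to every hyperedge, observe that the star centred at $v^*$ witnesses the hypertree property and that $\{v^*\}$ is a transversal, and argue that the VC-dimension is unchanged. The paper simply asserts the VC-dimension is preserved, whereas you spell out both directions explicitly (in particular the observation that $v^*$ cannot belong to any shattered set since the subset $\emptyset$ must be witnessed); this is exactly the detail needed and matches the intended argument.
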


\begin{proof}
    Given a hypergraph $\mathcal{H}=(\mathcal{V},\mathcal{E})$, let $\mathcal{H}'=(\mathcal{V}',\mathcal{E}')$ be the hypergraph obtained from $\mathcal{H}$ by adding a vertex $u$ to each hyperedge. Then, $\mathcal{H}'$ has the same VC-dimension as $\mathcal{H}$, and $\mathcal{H}'$ is a hypertree with transversal number~1 (since $\{u\}$ is a transversal). To see that it is a hypertree, notice that the tree $T$ is the star centered at $u$ whose leaves are $\mathcal{V}$. Since for all $e\in \mathcal{E}'$, we have $u\in e$, the vertices of $e$ form a subtree of $T$.
\end{proof}

\section{Tractability via Treewidth}

In this section, we provide a $2^{\bO(\tw\cdot \log \tw)}\cdot |V|$ algorithm to solve \GVCD, where $\tw$ denotes the treewidth of $G$. For the rest of this section, let $(G,X,Y,k)$ be an instance of \GVCD. First, we establish that if a shattered set intersects at least two components of $G-Z$ for some separator $Z$, then the shattered  set is small compared to $|Z|$:

\begin{lemma}\label{L:separator}
    Let $S$ be a shattered set of $G$, $Z$ a separator of $G$, and $C_1,\ldots, C_p$ the components of $G-Z$. If there are distinct $i,j \in [p]$ such that $S\cap C_i \neq \emptyset$ and $S\cap C_j \neq \emptyset$, then $|S| \leq  \log |Z| +2$.
\end{lemma}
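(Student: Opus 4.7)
The plan is to exploit the observation that, because $Z$ separates $G$, any vertex adjacent to two vertices lying in different components of $G-Z$ must itself lie in $Z$. This will let me force an exponential-in-$|S|$ number of distinct witnesses to come from $Z$, yielding the required bound.

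More concretely, first I would fix vertices $a\in S\cap C_i$ and $b\in S\cap C_j$ (which exist by hypothesis) and note that $a,b\in V\setminus Z$ since $C_i$ and $C_j$ are components of $G-Z$. Next, for any witness $y\in Y$ of a subset $S'\subseteq S$ with $\set{a,b}\subseteq S'$, the condition $N_S(y)=S'$ forces $y$ to be adjacent to both $a$ and $b$. A vertex $y\in V\setminus Z$ lies in a unique component $C_\ell$ of $G-Z$ and has $N(y)\subseteq C_\ell\cup Z$, so such a $y$ cannot be adjacent simultaneously to $a\in C_i\setminus Z$ and $b\in C_j\setminus Z$ when $i\neq j$. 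Therefore any such witness must satisfy $y\in Z$.

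To conclude, I would count: there are exactly $2^{|S|-2}$ subsets $S'\subseteq S$ with $\set{a,b}\subseteq S'$, and shattering supplies, for each such $S'$, a witness $y_{S'}\in Y$ with $N_S(y_{S'})=S'$. By the argument above, every $y_{S'}$ lies in $Z$, and the $y_{S'}$ are pairwise distinct since their $S$-neighborhoods are pairwise distinct. This gives $|Z|\geq 2^{|S|-2}$, which rearranges to $|S|\leq \log|Z|+2$.

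I do not anticipate a real technical obstacle here; the only point deserving care is verifying that the witness truly falls inside $Z$ rather than in some component $C_\ell$, which rests on the two facts that $a,b$ avoid $Z$ (being in components of $G-Z$) and that $G$ has no edges between distinct components of $G-Z$.
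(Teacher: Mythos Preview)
Your argument is correct and follows essentially the same route as the paper: fix two vertices of $S$ in different components of $G-Z$, observe that any witness of a subset containing both must lie in $Z$, and count the $2^{|S|-2}$ such subsets to obtain $|Z|\geq 2^{|S|-2}$. Your write-up is in fact more explicit than the paper's, spelling out why the witness must lie in $Z$ and why distinct subsets yield distinct witnesses.
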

\begin{proof}
    Let $v\in V(C_i) \cap S$ and $u\in V(C_j) \cap S$. If a vertex $x$ witnesses a subset $A \subseteq S$ that contains both $u$ and $v$, then $x$ is in $Z$ (as $Z$ separates $C_i$ and $C_j$). 
    As there are $2^{|S|-2}$ subsets of $S$ that contain both $u$ and $v$, there need to be at least $2^{|S|-2}$ witnesses in $Z$, i.e., $|S| \leq \log|Z| +2$.    
\end{proof}

Let $(T,\beta)$ be a (nice) tree-decomposition of $G$ of width $\tw$, and let $S$ be a shattered set of $G$. Similarly to Lemma~\ref{L:separator}, either there exists a bag that completely contains $S$, or $S$ is small compared to $\tw$:
\begin{lemma}\label{lemm:shattered-contained-in-bag}
    Let $(T,\beta)$ be a (nice) tree-decomposition of $G$ of width $\tw$, and let $S$ be a shattered set of $G$. Then, $|S| \leq \log \tw +2$ or there exists some node $v\in V(T)$ such that $S \subseteq \beta(v)$.
\end{lemma}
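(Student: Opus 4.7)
The natural approach is a contrapositive: assuming $S$ is not contained in any single bag, I aim to produce a small separator whose removal splits $S$ across two components, so that Lemma~\ref{L:separator} yields the claimed logarithmic upper bound on $|S|$.

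The first step is to associate with each vertex $u\in S$ its subtree $T_u = \{t\in V(T) : u\in \beta(t)\}$, which is non-empty by property~(2) of a tree-decomposition. If $\bigcap_{u\in S} T_u\neq \emptyset$, then any $t$ in the intersection gives a bag $\beta(t)$ containing all of $S$, and we are done. Otherwise, I invoke the Helly property of subtrees of a tree, which guarantees that any family of pairwise intersecting subtrees of a tree has a common vertex; the contrapositive gives two vertices $u_1,u_2\in S$ with $T_{u_1}\cap T_{u_2}=\emptyset$.

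Next I will exploit the standard separator property of tree-decompositions. Let $P=(p_1,\dots,p_\ell)$ be the shortest path in $T$ from $T_{u_1}$ to $T_{u_2}$, so $p_1\in T_{u_1}$, $p_\ell\in T_{u_2}$, $\ell\geq 2$, and no internal node of $P$ lies in $T_{u_1}\cup T_{u_2}$. Consider the edge $(p_1,p_2)$: it is a well-known consequence of the definition of a tree-decomposition that $Z:=\beta(p_1)\cap \beta(p_2)$ is a separator of $G$, where the two sides correspond to the two components of $T-\{(p_1,p_2)\}$. I will verify that $u_1\notin Z$ (because $p_2\notin T_{u_1}$, so $u_1\notin \beta(p_2)$) and $u_2\notin Z$ (because $p_1\notin T_{u_2}$, so $u_2\notin \beta(p_1)$), and that $u_1$ and $u_2$ land in opposite components of $G-Z$ via their membership in $\beta(p_1)$ and $\beta(p_\ell)$ respectively.

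At this point the conclusion follows from Lemma~\ref{L:separator}, which gives $|S|\leq \log|Z|+2$; since $|Z|\leq |\beta(p_2)|\leq \tw+1$, the bound $|S|\leq \log \tw + 2$ is reached (treating the slack between $\log(\tw+1)$ and $\log \tw$ as absorbed into the $+2$). The main technical care, and the only real obstacle, is in the third step: cleanly invoking the Helly property and checking that the chosen pair $(p_1,p_2)$ really yields a separator that excludes $u_1$ and $u_2$ while placing them on different sides. Everything else is a direct assembly of standard tree-decomposition facts together with Lemma~\ref{L:separator}.
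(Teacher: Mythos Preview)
Your proof is correct and follows essentially the same route as the paper: find two vertices of $S$ whose bag-subtrees in $T$ are disjoint, take the intersection of two adjacent bags on the connecting path as the separator $Z$, and apply Lemma~\ref{L:separator}. Your explicit appeal to the Helly property is a welcome addition (the paper simply asserts such a pair exists without justification), and note that since $u_1\in\beta(p_1)\setminus Z$ you in fact get $|Z|\le |\beta(p_1)|-1\le \tw$, so no ``slack-absorbing'' argument is needed.
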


\begin{proof}
     If there exists a bag that contains $S$, then we are done. Hence, we assume that there is no $v\in V(T)$ such that $S \subseteq \beta(v)$, and prove that $|S| \leq \log \tw +2$. Without loss of generality, assume that $u$ and $w$ are two distinct vertices of $S$ that do not appear together in any bag of the tree-decomposition. Further, let $t_1$ and $t_2$ be two distinct nodes of  $T$ such that $u\in \beta(t_1)$, $w\in \beta(t_2)$, and $t_1$ and $t_2$ are at minimum distance in $T$. Let $P$ be the unique $(t_1,t_2)$-path in $T$, and let $t$ be the neighbor of the node $t_1$ on $P$. Let $Z = \beta(t_1) \cap \beta(t)$. First, observe that, by our choice of $t,t_1$, and $t_2$, $u\notin \beta(t)$ and $w\notin \beta(t_1)$. Hence, $u,w\notin Z$. Moreover, observe that $Z$ is a separator in $G$ (due to the properties of tree-decompositions~\cite{bookParameterized}) and $u$ and $w$ are vertices of $S$ that appear in  distinct components of $G-Z$. Hence, by Lemma~\ref{L:separator}, we have that $|S| \leq \log |Z|+2 \leq \log \tw+2$.
\end{proof}

We now proceed with the $2^{\bO(\tw\cdot \log \tw)}\cdot |V|$-time algorithm to solve \GVCD given a nice tree-decomposition $(T, \beta)$ of $G$. We compute a nice tree-decomposition of $G$ of width at most $2\tw$ and with $\bO(|V|)$ nodes in time $2^{\bO(\tw)}\cdot|V|$ (see Section~\ref{S:prelim}). Our algorithm has two phases. 

{\bf First Phase.} First, we assume that a shattered set of maximum size is contained in a bag of $T$. Using the technique of~\cite[Theorem 7]{DBLP:conf/iwpec/DrangeGMR23},\footnote{Note that this theorem considers the degeneracy of the graph for the running time, however, for any graph, its degeneracy is at most its treewidth. Thus, we obtain the runtimes we use here.} after a pre-processing step that takes $2^{\bO(\tw)}\cdot |V|$ time, for each node $v\in V(T)$ and each subset $S\subseteq \beta(v)\cap X$, we check in $2^{\bO(\tw)}$ time whether $S$ is shattered by the vertices of~$Y$. Recall that $|V(T)| = \bO(|V|)$, thus, this phase takes $2^{\bO(\tw)}\cdot |V|$ time. If $|S|\geq k$ for any such shattered set $S$, then we stop and return YES. Otherwise, we move to the second phase.

%First, it assumes that a shattered set of maximum size is contained in a bag of $T$. In this case, using the technique of~\cite[Theorem 7]{DBLP:conf/iwpec/DrangeGMR23}, after a pre-processing step that takes $2^{\bO(\tw)}\cdot |V|$ time, for each node $v\in V(T)$ and each subset $S\subseteq \beta(v)\cap X$, we check in $2^{\bO(\tw)}$ time whether $S$ is shattered by the vertices of~$Y$. Thus, this phase takes $2^{\bO(\tw)}\cdot |V|$ time. If $|S|\geq k$ for any such shattered set $S$, then we stop and return YES. By Lemma~\ref{lemm:shattered-contained-in-bag}, if $\log \tw+2 < |S|< k$ for all such shattered sets~$S$, then we return NO. Otherwise, $|S|<k$ and $|S| \leq \log \tw +2$ for all such shattered sets $S$. Using this, we compute a shattered set of maximum size via dynamic programming on the tree-decomposition in $2^{\bO(\tw\cdot \log \tw)}\cdot |V|$ time, dominating our runtime. Below, we consider this second phase in detail. 

%Recall that we can assume that $|V(T)| = \bO(|V|)$. For each node $v\in V(T)$ and each subset $A\subseteq \beta(v)$, we check in $|V|$ time if the set $A$ is shattered in $G$, and store the largest shattered set. Hence, for each $v\in V(T)$, we require $2^{|\beta(v)|}\cdot |V|^{\bO(1)}$ time, and since $|V(T)| = \bO(|V|)$, this procedure requires $2^{\tw}\cdot |V|^{\bO(1)}$ time in total.  Thus, for a maximum shattered set $S$, we now assume that $|S| \leq \log \tw+2$ and $k\leq \log \tw+2$.  We apply dynamic programming on a nice tree-decomposition of $G$ of width $\tw$ to obtain a shattered set $S$ of maximum size.

{\bf Second Phase.} We may now assume that any shattered set of size $k$ in $G$ is not entirely contained within any bag of $T$. By Lemma~\ref{lemm:shattered-contained-in-bag}, if $\log \tw+2 < k$, then we return NO. Otherwise, $k\leq \log \tw +2$. We now show how we compute a shattered set of maximum size via dynamic programming on the tree-decomposition in $2^{\bO(\tw\cdot \log \tw)}\cdot |V|$ time, which dominates the runtime of the overall algorithm.

\textbf{Equivalent Formulation.} Let $S\subseteq X$ be a shattered set of $G$ such that $|S| = k$. Then, there exists a set $W\subseteq Y$ (of witnesses) such that $|W| = 2^k$ and, for each subset $A\subseteq S$, there is a unique vertex $w$ in $W$ that witnesses $A$, i.e., $N_S(w) = A$. Hence, finding a shattered set of size $k$ in $G$ is equivalent to finding $S\subseteq X$ and $W\subseteq Y$ such that $|S|=k$, $|W|=2^k$, and $W$ witnesses $S$ (see Figure~\ref{fig:equivalent}). Let us fix a labeling $\{s_1,\ldots,s_k\}$ of vertices in $S$ and a labeling $\{w_1,\ldots, w_{2^k}\}$ of vertices in $W$. This gives us a \textit{pattern graph} $\mathcal{P}$ (the idea of a pattern graph was introduced in~\cite{DBLP:conf/iwpec/DrangeGMR23}). Now, if we can find a (induced) subgraph $H$ of $G$ with $|V(H)| \leq 2^k+k$ and a function  $h:V(H) \rightarrow V(\mathcal{P})$ such that (i) for any $u,v \in V(H)$ satisfying $h(u)\in S$ and $h(v) \in W$, or vice versa, $uv\in E(H)$ if and only if $h(u)h(v)\in E(\mathcal{P})$, and (ii) for distinct $u,v\in V(H)$ satisfying $h(u),h(v)\in S$ or $h(u),h(v) \in W$, $h(u)\neq h(v)$; then it is easy to see that the vertices of $H$ mapped to $S$ via $h$ form a shattered set. Thus, in our dynamic programming algorithm, we look for a such a pattern and function in~$G$. 

\begin{figure}[h]
    \centering
    \includegraphics[scale=0.65]{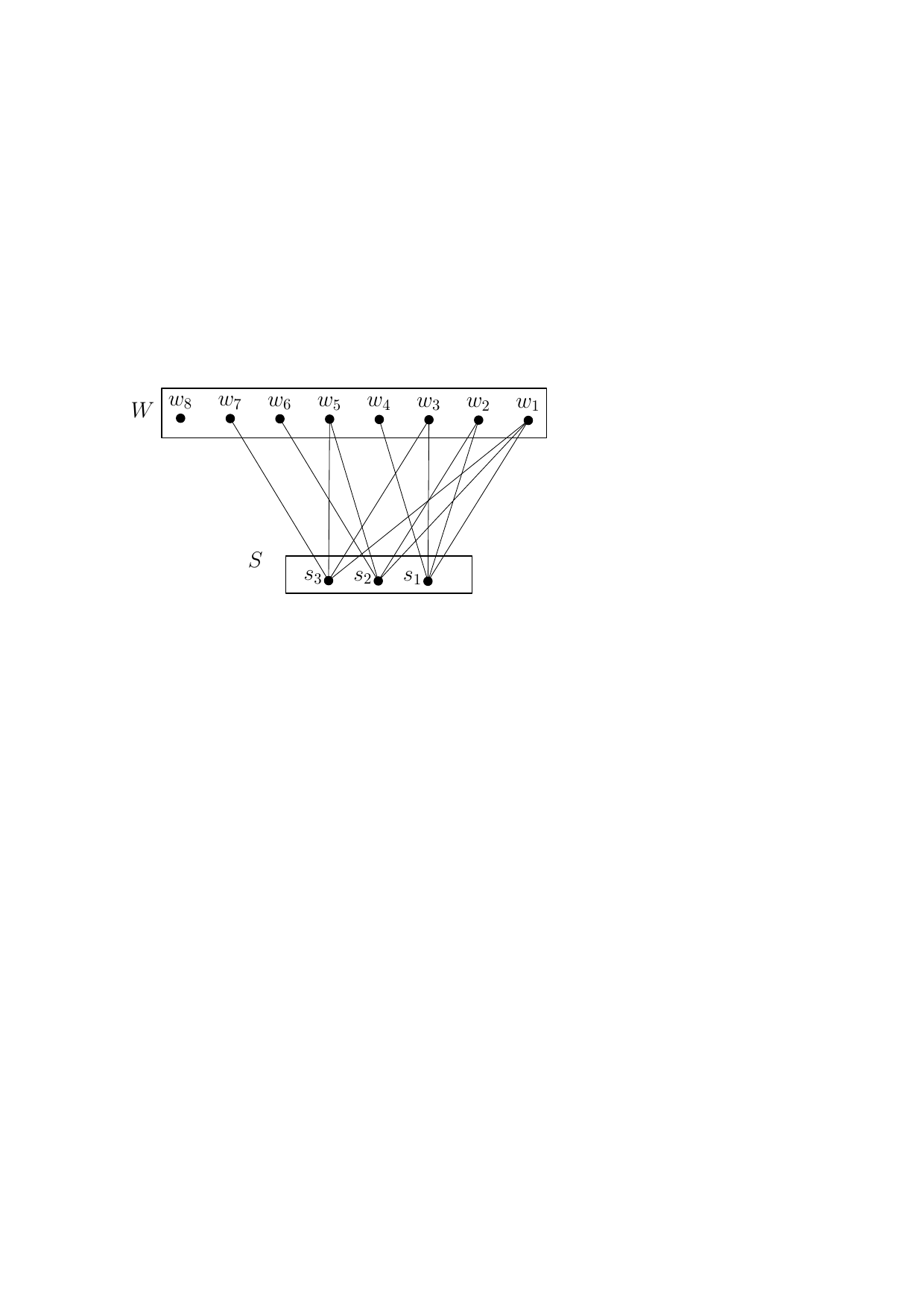}
    \caption{An illustration for the pattern graph $\mathcal{P}$. Here, $S\subseteq X$, $W\subseteq Y$, and $W$ witnesses $S$.}
    \label{fig:equivalent}
\end{figure}

\textbf{Mapping Bags to the Pattern.} Consider a fixed labeling of $S\cup W = \{s_1,\ldots,s_k, w_1,\ldots,w_{2^k}\}$, and its corresponding pattern $\mathcal{P}$. Our algorithm  will check if there is a subgraph of $G$ corresponding to this labeling and pattern. To this end, we define a dynamic programming table (DP table), for which the following function will be useful.  For a node $t\in V(T)$, we have a  function $f:V(\mathcal{P}) \rightarrow \beta(t) \cup \{\uparrow,\downarrow\}$, which assigns each vertex $x\in W\cup S$ to a vertex in the bag $\beta(t)$, or $\downarrow$, which means that the vertex $x$ is mapped to a vertex in $G^{\downarrow}_t$, or to $\uparrow$, which means that $x$ will be mapped to a vertex in $G^{\uparrow}_t$.

\begin{definition}[Valid function]\label{D:valid}
   Let $t$ be a node of $T$. A function $f:V(\mathcal{P}) \rightarrow \beta(t) \cup \{\uparrow,\downarrow\}$ is {\emph valid} if the following conditions hold for all $i\in [k]$ and $j\in [2^k]$:

1. If $f(s_i) \in \beta(t)$, then $f(s_i) \in X$. Similarly, if $f(w_j) \in \beta(t)$, then $f(w_j) \in Y$.

2. If $f(s_i),f(w_j)\in \beta(t)$, (i.e., $f(s_i) \notin \{\uparrow,\downarrow\}$ and  $f(w_j) \notin \{\uparrow,\downarrow\}$), then   $f(s_i)f(w_j) \in E(G)$ if and only if $s_iw_j \in E(\mathcal{P})$.
       
3. If $s_iw_j \in E(\mathcal{P})$, then neither $f(s_i) = \uparrow$ and $f(w_j) = \downarrow$ nor $f(s_i) = \uparrow$ and $f(w_j) = \downarrow$.

4. For distinct $i,i'\in [k]$ if $f(s_i),f(s_{i'})\in \beta(t)$, then $f(s_i)\neq f(s_{i'})$. Similarly, for distinct $j,j'\in [2^k]$, if $f(w_j),f(w_{j'})\in \beta(t)$, then $f(w_j) \neq f(w_{j'})$.
\end{definition}

\begin{observation}\label{O:DPStates}
    For each node $t$, there are at most $(\tw+2)^{k+2^k}$ possible functions, i.e., $2^{\bO(\tw\cdot \log \tw)}$ many functions of the form $f:V(\mathcal{P})\rightarrow \beta(t)\cup \{ \uparrow,\downarrow\}$. 
\end{observation}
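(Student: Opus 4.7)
The plan is a direct counting argument that then exploits the contextual bound on $k$ from the second phase. For the first inequality, note that by construction, the pattern graph $\mathcal{P}$ has vertex set $\{s_1,\ldots,s_k\}\cup\{w_1,\ldots,w_{2^k}\}$, so $|V(\mathcal{P})|=k+2^k$. A function $f:V(\mathcal{P})\rightarrow \beta(t)\cup\{\uparrow,\downarrow\}$ is determined by an independent choice of image for each of these $k+2^k$ vertices from a codomain of size $|\beta(t)|+2$. Since the fixed (nice) tree-decomposition has width at most $2\tw$, we have $|\beta(t)|\leq 2\tw+1$, and the total number of such (not necessarily valid) functions is at most $(|\beta(t)|+2)^{k+2^k}$, which we absorb into $(\tw+2)^{k+2^k}$ up to a constant factor inside the exponential base (this rewriting is harmless for the $2^{\bO(\tw\cdot\log\tw)}$ bound).

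For the asymptotic estimate, the key is that we are in the second phase of the algorithm, where we have already returned an answer unless $k\leq \log\tw+2$ (by Lemma~\ref{lemm:shattered-contained-in-bag}, since any shattered set of size $k$ that is not contained in a single bag must satisfy this). Under this assumption, $2^k\leq 2^{\log\tw+2}=4\tw$, and therefore $k+2^k\leq \log\tw+2+4\tw = \bO(\tw)$. Substituting, we obtain
\[
(\tw+2)^{k+2^k}\;=\;2^{(k+2^k)\cdot\log(\tw+2)}\;=\;2^{\bO(\tw)\cdot\log(\tw+2)}\;=\;2^{\bO(\tw\cdot\log\tw)},
\]
as claimed.

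There is essentially no obstacle here: the statement is a bookkeeping observation whose only subtlety is remembering to invoke Lemma~\ref{lemm:shattered-contained-in-bag} to control the doubly-exponential-looking term $2^k$. Without that bound, the exponent $k+2^k$ would not be linear in $\tw$, and the desired runtime would not follow. I would therefore emphasize in the write-up that the observation is to be read as a bound on the DP states used in the second phase, where the hypothesis $k\leq\log\tw+2$ is in force.
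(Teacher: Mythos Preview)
The paper states this observation without proof, so there is nothing to compare against; your argument is correct and is precisely the intended one---counting the functions directly and invoking the second-phase hypothesis $k\le\log\tw+2$ (from Lemma~\ref{lemm:shattered-contained-in-bag}) to bound $k+2^k=\bO(\tw)$. The only quibble is cosmetic: with bags of size at most $2\tw+1$ the codomain has up to $2\tw+3$ elements rather than $\tw+2$, but as you note this is irrelevant for the $2^{\bO(\tw\cdot\log\tw)}$ conclusion.
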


For our bottom-up dynamic programming on $(T,\beta)$, we use the notion of \textit{extending valid functions}, which, intuitively, ``lifts'' a valid function  $f:V(\mathcal{P})\rightarrow \beta(t)$  to a partial solution of $G_t$ (if it exists). 

\begin{definition}[Extending valid functions]
    Let $t$ be a node of $T$ and $f:V(\mathcal{P}) \rightarrow \beta(t) \cup \{\uparrow,\downarrow\}$ a valid function. A function $g:V(\mathcal{P}) \rightarrow V(G_t) \cup \{\uparrow\}$ {\em extends} $f$ if for all $i\in [k]$ and $j\in [2^k]$:
   
1. If $f(s_i) \in \beta(t) \cup \{\uparrow\}$, then $g(s_i) = f(s_i)$, and if $f(w_j) \in \beta(t) \cup \{\uparrow\}$, then $g(w_j) = f(w_j)$. If $f(s_i) = \downarrow$, then $g(s_i) \in G^{\downarrow}_t$, and if $f(w_j) = \downarrow$, then $g(w_j) \in G^{\downarrow}_t$. 
       
2. If $g(s_i),g(w_j)\in V(G_t)$, then $g(s_i)g(t_j) \in E(G)$ if and only if $s_it_j \in E(\mathcal{P})$.
       
3. If $x,y\in V(\mathcal{P})$ are two distinct vertices such that $g(x),g(y)\in V(G_t)$, then $g(x)\neq g(y)$. 
\end{definition}

\textbf{DP States.} To formally define our DP table, we define the \textit{DP states}. Let $t$ be a node of $T$ and $f:V(\mathcal{P})\rightarrow \beta(t)\cup \{ \uparrow,\downarrow\}$ a function. Then, a \textit{DP state} $\Gamma(t,f) \in \{0,1\}$, where $\Gamma(t,f) = 1$ implies that $f$ is a valid function and  there is a mapping $g:V(\mathcal{P}) \rightarrow V(G_t)$ that extends $f$, and $\Gamma(t,f) = 0$ implies that either $f$ is invalid or $f$ is valid but there is no mapping $g:V(\mathcal{P}) \rightarrow V(G_t)$ that extends $f$. Next, we explain how we compute our DP States in a bottom-up manner for our tree-decomposition.

\textbf{Leaf Node.} For each leaf node $t$, $\beta(t) = \emptyset$. Thus, the only valid function for $t$ is $f:V(\mathcal{P}) \rightarrow \{\uparrow\}$. Furthermore, $\Gamma(t,f) = 1$.

\textbf{Introduce Node.} 
Let $t$ be an introduce node and $c$ its unique child such that $\beta(t)=\beta(c)\cup \{v\}$ for some $v\in V(G)$. Two DP states $\Gamma(t,f)$ and $\Gamma(c,f')$ are \textit{introduce compatible} if the following hold:

1. $f$ is a valid function for $t$, and $f'$ is a valid function for $c$.
    
2. If $f(x) = v$ (for some $x\in V(\mathcal{P})$), then $f'(x) = \uparrow$ and, for each $y \in V(\mathcal{P})\setminus \{x\}$, $f(y) = f'(y)$.

We compute  $\Gamma(t,f)$ as follows:
    $\Gamma(t,f) = \underset{f' \text{ is introduce compatible with} f}{\vee} \{\Gamma (c,f')\}$.

\textbf{Forget Node.} Let  $t$ be a forget node which has a unique child $c$ such that  $\beta(c)=\beta(t)\cup \{v\}$ for some $v \in V(G)$. Two DP states $\Gamma(t,f)$ and $\Gamma(c,f')$ are \textit{forget compatible} if the following hold:

1. $f$ is a valid function for $t$, and $f'$ is a valid function for $c$.
    
2. If $f'(x) = v$ (for some $x\in V(\mathcal{P})$), then $f(x) = \downarrow$ and, for each $y \in V(\mathcal{P})\setminus \{x\}$, $f(y) = f'(y)$.

We compute  $\Gamma(t,f)$ as follows:
    $\Gamma(t,f) = \underset{f' \text{ is forget compatible with} f}{\vee} \{\Gamma (c,f')\}$.
  
\textbf{Join Node.} Let $t$ have exactly two children $c_1,c_2$, and let $\beta(t)=\beta(c_1)=\beta(c_2)$. The DP states $\Gamma(c_1,f_1)$ and $\Gamma(c_2,f_2)$ are \textit{join compatible} with $\Gamma(t,f)$ if the following hold:
    
1. $f,f_1,$ and $f_2$ are valid functions for $t,t_1,$ and $t_2$, respectively.
    
2. For all $x\in V(\mathcal{P})$, if $f(x) \in \beta(t) \cup \{\uparrow\}$, then $f(x) = f_1(x) = f_2(x)$, and if $f(x) = \downarrow$, then either $f_1(x) = \downarrow$ and $f_2(x) = \uparrow$, or $f_1(x) = \uparrow$ and $f_2(x) = \downarrow$.

We compute  $\Gamma(t,f)$ as follows:
    $\Gamma(t,f) = \underset{f_1,f_2 \text{ are join compatible with} f}{\vee} \{(\Gamma (c_1,f_1) \wedge \Gamma(c_2,f_2))\}$.

For the root node $r$ of $T$, $\beta(r) = \emptyset$ and $G_r = G$. Also, note that there is a unique valid function $f: V(\mathcal{P})\rightarrow \beta(r)$, i.e., for each $x\in V(\mathcal{P})$, $f(x) = \downarrow$. Now, $\Gamma(r,f) = 1$ implies that there is a function $g: V(\mathcal{P})\rightarrow V$ that extends $f$, and the subset of vertices $S\subseteq V$ mapped to vertices in $\set{s_1,\ldots,s_k}$ is a shattered set and the subset of vertices $W\subseteq V$ mapped to vertices in $\set{w_1,\ldots,w_{2^k}}$ witnesses $S$. Similarly, if $\Gamma(r,f) = 0$, then there is no shattered set of size $k$ in $G$. Since (i) we consider at most $2^{\bO(\tw\cdot \log \tw)}$ possible functions for each node $t\in V(T)$ (\Cref{O:DPStates}), (ii) all our operations over a function require polynomial time in the size of the (at most three) considered bags, and (iii) there are $\bO(|V|)$ bags, we have that the running time of our second phase is $2^{\bO(\tw\cdot \log\tw)}\cdot |V|$. Since phase one requires $2^{\bO(\tw)}\cdot |V|$ time, we have the following result.

\begin{theorem}\label{T:treewidth}
    \GVCD admits an algorithm running in $2^{\bO(\tw\cdot \log \tw)}\cdot |V|$ time.
\end{theorem}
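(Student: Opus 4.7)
The plan is a two-phase approach following the algorithmic outline developed above. First, I would compute a nice tree decomposition $(T,\beta)$ of $G$ of width at most $2\tw$ with $\bO(|V|)$ nodes in $2^{\bO(\tw)}\cdot|V|$ time using Korhonen's algorithm~\cite{korhonen2023single}. The algorithm then splits according to whether some shattered set of size $\geq k$ is contained in a single bag.

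For Phase~1, for every node $t\in V(T)$, I would iterate over all $2^{\bO(\tw)}$ subsets $S \subseteq \beta(t) \cap X$ and check whether $S$ is shattered by vertices of $Y$; using the pre-processing of Drange, Greaves, Muzi, and Reidl~\cite{DBLP:conf/iwpec/DrangeGMR23} (which is at most quadratic in the degeneracy, hence in $\tw$), each check runs in $2^{\bO(\tw)}$ time after a $2^{\bO(\tw)}\cdot|V|$-time preprocessing. Summed over the $\bO(|V|)$ bags, this stays within $2^{\bO(\tw)}\cdot|V|$ time. If a shattered set of size $\geq k$ is found, answer YES; otherwise proceed to Phase~2.

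For Phase~2, Lemma~\ref{lemm:shattered-contained-in-bag} guarantees that any remaining shattered set of size $\geq k$ satisfies $k \leq \log\tw+2$; otherwise we return NO. In that regime, the pattern graph $\mathcal{P}$ has $k+2^k = \bO(\tw)$ vertices, so by Observation~\ref{O:DPStates} there are $(\tw+2)^{\bO(\tw)} = 2^{\bO(\tw\cdot\log\tw)}$ valid functions $f:V(\mathcal{P})\to \beta(t)\cup\{\uparrow,\downarrow\}$ per node. I would then compute the table $\Gamma(t,f)$ bottom-up using the leaf/introduce/forget/join transitions described above, and answer YES iff $\Gamma(r,f_0)=1$ at the root, where $f_0$ sends every pattern vertex to $\downarrow$.

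The main obstacle is establishing correctness of the DP by induction on $T$, namely that $\Gamma(t,f)=1$ exactly when a map $g:V(\mathcal{P})\to V(G_t)$ extending $f$ and realising the required adjacencies exists. The leaf, introduce, and forget cases are routine verifications from the transition rules and the bag semantics of tree decompositions. The join node is the most delicate step: one must argue that pattern vertices marked $\downarrow$ at the parent split uniquely into those realised in $G^{\downarrow}_{c_1}$ versus $G^{\downarrow}_{c_2}$, and that the adjacency constraints across $\beta(t)$ glue consistently; this relies on the fact that every edge of the pattern touching a $\downarrow$-vertex must cross $\beta(t)$, as ensured by condition~(3) of validity. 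Once correctness is established, the running time is $\bO(|V|)\cdot 2^{\bO(\tw\log\tw)}\cdot\mathrm{poly}(\tw)$ for Phase~2, dominating Phase~1 and the decomposition computation, yielding the claimed $2^{\bO(\tw\log\tw)}\cdot|V|$ bound.
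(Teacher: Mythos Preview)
Your proposal is correct and follows essentially the same two-phase approach as the paper: Phase~1 checks all subsets of each bag using the technique of~\cite{DBLP:conf/iwpec/DrangeGMR23}, and Phase~2 invokes Lemma~\ref{lemm:shattered-contained-in-bag} to bound $k\le\log\tw+2$ and then runs the pattern-graph DP with states $\Gamma(t,f)$ over the $2^{\bO(\tw\log\tw)}$ valid functions, exactly as described in the paper. Your added discussion of the inductive correctness argument (particularly the join-node splitting of $\downarrow$-vertices between the two subtrees) is a welcome elaboration that the paper leaves largely implicit.
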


\section{Conclusion}

Computing the VC-dimension of a set system or graph has numerous applications in machine learning and other areas.
We have advanced the understanding of its parameterized complexity, providing algorithms and conditional lower bounds for several important parameters, including the maximum degree, treewidth ($\tw$), and vertex cover number ($\vcn$). The most challenging problem left open by our work is to close the gap between the running time of our $2^{\bO(\tw\cdot \log \tw)}\cdot |V|$-time algorithm and our $2^{o(\vcn+k)}\cdot |V|^{\bO(1)}$ \ETH-based lower bound for \GVCD.
It would also be interesting to know whether our 1-additive \FPT\ approximation algorithm for \VCD can be improved to an exact \FPT\ algorithm.
Future work could also include studying the setting in which the set system is defined by a circuit, which allows the input size to be dependent only on the size of the domain in some cases~\cite{DBLP:journals/jcss/Schaefer99}.
Notably, our lower bound from Theorem~\ref{T:vcn} also holds in this setting.

\acksection{This work was supported by the French government IDEX-ISITE initiative 16-IDEX-0001 (CAP 20-25), the International Research Center "Innovation Transportation and Production Systems" of the I-SITE CAP 20-25, the ANR project GRALMECO (ANR-21-CE48-0004), the CNRS IRL ReLaX, the EU Horizon Europe TaRDIS project (grant agreement 101093006), and the INSPIRE Faculty Fellowship by DST, Govt of India. We thank Maël Dumas for comments on a preliminary version of this manuscript.}

\bibliographystyle{plain} 
\bibliography{main}

\begin{thebibliography}{10}

\bibitem{ABC95}
Martin Anthony, Graham~R. Brightwell, and Colin Cooper.
\newblock The {V}apnik-{C}hervonenkis dimension of a random graph.
\newblock {\em Discrete Math.}, 138(1-3):43--56, 1995.

\bibitem{BHV10}
Maria{-}Florina Balcan, Steve Hanneke, and Jennifer~Wortman Vaughan.
\newblock The true sample complexity of active learning.
\newblock {\em Mach. Learn.}, 80(2-3):111--139, 2010.

\bibitem{DBLP:journals/tcs/BazganFS19}
Cristina Bazgan, Florent Foucaud, and Florian Sikora.
\newblock Parameterized and approximation complexity of partial {VC} dimension.
\newblock {\em Theor. Comput. Sci.}, 766:1--15, 2019.

\bibitem{BBC0S23}
Shai Ben{-}David, Alex Bie, Cl{\'{e}}ment~L. Canonne, Gautam Kamath, and
  Vikrant Singhal.
\newblock Private distribution learning with public data: The view from sample
  compression.
\newblock In {\em Advances in Neural Information Processing Systems 36: Annual
  Conference on Neural Information Processing Systems 2023, NeurIPS 2023},
  2023.

\bibitem{BL98}
Shai Ben{-}David and Ami Litman.
\newblock Combinatorial variability of {V}apnik-{C}hervonenkis classes with
  applications to sample compression schemes.
\newblock {\em Discrete Appl. Math.}, 86(1):3--25, 1998.

\bibitem{BDS25}
Sujoy Bhore, Devdan Dey, and Satyam Singh.
\newblock Online epsilon net and piercing set for geometric concepts.
\newblock In {\em The 13th International Conference on Learning
  Representations, {ICLR} 2025}, 2025.

\bibitem{BEHW89}
Anselm Blumer, Andrzej Ehrenfeucht, David Haussler, and Manfred~K. Warmuth.
\newblock Learnability and the {V}apnik-{C}hervonenkis dimension.
\newblock {\em J. {ACM}}, 36(4):929--965, 1989.

\bibitem{Bodlaender96}
Hans~L. Bodlaender.
\newblock A linear-time algorithm for finding tree-decompositions of small
  treewidth.
\newblock {\em {SIAM} J. Comput.}, 25(6):1305--1317, 1996.

\bibitem{Bonamyetal21}
Marthe Bonamy, {\'{E}}douard Bonnet, Nicolas Bousquet, Pierre Charbit, Panos
  Giannopoulos, Eun~Jung Kim, Pawel Rzazewski, Florian Sikora, and
  St{\'{e}}phan Thomass{\'{e}}.
\newblock {EPTAS} and subexponential algorithm for maximum clique on disk and
  unit ball graphs.
\newblock {\em J. {ACM}}, 68(2):9:1--9:38, 2021.

\bibitem{BLLPT15}
Nicolas Bousquet, Aur{\'{e}}lie Lagoutte, Zhentao Li, Aline Parreau, and
  St{\'{e}}phan Thomass{\'{e}}.
\newblock Identifying codes in hereditary classes of graphs and {VC}-dimension.
\newblock {\em {SIAM} J. Discrete Math.}, 29(4):2047--2064, 2015.

\bibitem{BT15}
Nicolas Bousquet and St{\'{e}}phan Thomass{\'{e}}.
\newblock {VC}-dimension and {E}rd{\H{o}}s-{P}{\'{o}}sa property.
\newblock {\em Discrete Math.}, 338(12):2302--2317, 2015.

\bibitem{BrandGR23}
Cornelius Brand, Robert Ganian, and Mathis Rocton.
\newblock New complexity-theoretic frontiers of tractability for neural network
  training.
\newblock In {\em Advances in Neural Information Processing Systems 36: Annual
  Conference on Neural Information Processing Systems 2023, NeurIPS 2023},
  2023.

\bibitem{BDCV98}
Andreas Brandst{\"{a}}dt, Feodor~F. Dragan, Victor Chepoi, and Vitaly~I.
  Voloshin.
\newblock Dually chordal graphs.
\newblock {\em {SIAM} J. Discrete Math.}, 11(3):437--455, 1998.

\bibitem{DBLP:conf/isaac/ChakrabortyFMT24}
Dipayan Chakraborty, Florent Foucaud, Diptapriyo Majumdar, and Prafullkumar
  Tale.
\newblock Tight (double) exponential bounds for identification problems:
  Locating-dominating set and test cover.
\newblock In {\em 35th International Symposium on Algorithms and Computation,
  {ISAAC} 2024}, volume 322 of {\em LIPIcs}, pages 19:1--19:18. Schloss
  Dagstuhl - Leibniz-Zentrum f{\"{u}}r Informatik, 2024.

\bibitem{CCMR24}
J{\'{e}}r{\'{e}}mie Chalopin, Victor Chepoi, Fionn {Mc~Inerney}, and
  S{\'{e}}bastien Ratel.
\newblock Non-clashing teaching maps for balls in graphs.
\newblock In {\em The 37th Annual Conference on Learning Theory, COLT 2024},
  volume 247 of {\em PMLR}, pages 840--875, 2024.

\bibitem{CCMRV23}
J{\'{e}}r{\'{e}}mie Chalopin, Victor Chepoi, Fionn {Mc~Inerney},
  S{\'{e}}bastien Ratel, and Yann Vax{\`{e}}s.
\newblock Sample compression schemes for balls in graphs.
\newblock {\em {SIAM} J. Discrete Math.}, 37(4):2585--2616, 2023.

\bibitem{CCMW22}
J{\'{e}}r{\'{e}}mie Chalopin, Victor Chepoi, Shay Moran, and Manfred~K.
  Warmuth.
\newblock Unlabeled sample compression schemes and corner peelings for ample
  and maximum classes.
\newblock {\em J. Comput. Syst. Sci.}, 127:1--28, 2022.

\bibitem{CHMY24}
Zachary Chase, Bogdan Chornomaz, Steve Hanneke, Shay Moran, and Amir
  Yehudayoff.
\newblock Dual {VC} dimension obstructs sample compression by embeddings.
\newblock In {\em The 37th Annual Conference on Learning Theory, COLT 2024},
  volume 247 of {\em PMLR}, pages 923--946, 2024.

\bibitem{CCCT16}
Xi~Chen, Yu~Cheng, and Bo~Tang.
\newblock On the recursive teaching dimension of {VC} classes.
\newblock In {\em Advances in Neural Information Processing Systems 29: Annual
  Conference on Neural Information Processing Systems 2016, NIPS 2016}, pages
  2164--2171, 2016.

\bibitem{CEV07}
Victor Chepoi, Bertrand Estellon, and Yann Vax{\`{e}}s.
\newblock Covering planar graphs with a fixed number of balls.
\newblock {\em Discrete Comput. Geom.}, 37(2):237--244, 2007.

\bibitem{CLR20}
Victor Chepoi, Arnaud Labourel, and S{\'{e}}bastien Ratel.
\newblock On density of subgraphs of {C}artesian products.
\newblock {\em J. Graph Theory}, 93(1):64--87, 2020.

\bibitem{CCDV24}
David Coudert, M{\'{o}}nika Csik{\'{o}}s, Guillaume Ducoffe, and Laurent
  Viennot.
\newblock Practical computation of graph {VC}-dimension.
\newblock In {\em 22nd International Symposium on Experimental Algorithms,
  {SEA} 2024}, volume 301 of {\em LIPIcs}, pages 8:1--8:20. Schloss Dagstuhl -
  Leibniz-Zentrum f{\"{u}}r Informatik, 2024.

\bibitem{C90}
Bruno Courcelle.
\newblock The monadic second-order logic of graphs. {I}. {R}ecognizable sets of
  finite graphs.
\newblock {\em Information and computation}, 85(1):12--75, 1990.

\bibitem{CBM18}
Daniel Cullina, Arjun~Nitin Bhagoji, and Prateek Mittal.
\newblock {PAC}-learning in the presence of adversaries.
\newblock In {\em Advances in Neural Information Processing Systems 31: Annual
  Conference on Neural Information Processing Systems 2018, NeurIPS 2018},
  pages 228--239, 2018.

\bibitem{bookParameterized}
Marek Cygan, Fedor~V. Fomin, \L{}ukasz Kowalik, Daniel Lokshtanov, D{\'{a}}niel
  Marx, Marcin Pilipczuk, Micha\l{} Pilipczuk, and Saket Saurabh.
\newblock {\em Parameterized Algorithms}.
\newblock Springer, 2015.

\bibitem{DFSZ14}
Thorsten Doliwa, Gaojian Fan, Hans~Ulrich Simon, and Sandra Zilles.
\newblock Recursive teaching dimension, {VC}-dimension and sample compression.
\newblock {\em J. Mach. Learn. Res.}, 15(1):3107--3131, 2014.

\bibitem{Domke15}
Justin Domke.
\newblock Maximum likelihood learning with arbitrary treewidth via fast-mixing
  parameter sets.
\newblock In {\em Advances in Neural Information Processing Systems 28: Annual
  Conference on Neural Information Processing Systems 2015, NIPS 2015}, pages
  874--882, 2015.

\bibitem{DBLP:conf/colt/DowneyEF93}
Rodney~G. Downey, Patricia~A. Evans, and Michael~R. Fellows.
\newblock Parameterized learning complexity.
\newblock In {\em The 6th Annual {ACM} Conference on Computational Learning
  Theory, {COLT} 1993}, pages 51--57, 1993.

\bibitem{DF13}
Rodney~G. Downey and Michael~R. Fellows.
\newblock {\em Fundamentals of Parameterized Complexity}.
\newblock Texts in Computer Science. Springer, 2013.

\bibitem{DBLP:conf/iwpec/DrangeGMR23}
P{\aa}l~Gr{\o}n{\aa}s Drange, Patrick Greaves, Irene Muzi, and Felix Reidl.
\newblock Computing complexity measures of degenerate graphs.
\newblock In {\em 18th International Symposium on Parameterized and Exact
  Computation, {IPEC} 2023}, volume 285 of {\em LIPIcs}, pages 14:1--14:21.
  Schloss Dagstuhl - Leibniz-Zentrum f{\"{u}}r Informatik, 2023.

\bibitem{Ducoffe21}
Guillaume Ducoffe.
\newblock On computing the average distance for some chordal-like graphs.
\newblock In {\em 46th International Symposium on Mathematical Foundations of
  Computer Science, {MFCS} 2021}, volume 202 of {\em LIPIcs}, pages
  44:1--44:16. Schloss Dagstuhl - Leibniz-Zentrum f{\"{u}}r Informatik, 2021.

\bibitem{DuHaVi}
Guillaume Ducoffe, Michel Habib, and Laurent Viennot.
\newblock Diameter computation on {$H$}-minor free graphs and graphs of bounded
  (distance) {VC}-dimension.
\newblock In {\em 31st Annual {ACM-SIAM} Symposium on Discrete Algorithms,
  {SODA} 2020}, pages 1905--1922, 2020.

\bibitem{DHV22}
Guillaume Ducoffe, Michel Habib, and Laurent Viennot.
\newblock Diameter, eccentricities and distance oracle computations on
  {$H$}-minor free graphs and graphs of bounded (distance)
  {V}apnik-{C}hervonenkis dimension.
\newblock {\em {SIAM} J. Comput.}, 51(5):1506--1534, 2022.

\bibitem{DKP24}
Lech Duraj, Filip Konieczny, and Krzysztof Potepa.
\newblock Better diameter algorithms for bounded {VC}-dimension graphs and
  geometric intersection graphs.
\newblock In {\em 32nd Annual European Symposium on Algorithms, {ESA} 2024},
  volume 308 of {\em LIPIcs}, pages 51:1--51:18. Schloss Dagstuhl -
  Leibniz-Zentrum f{\"{u}}r Informatik, 2024.

\bibitem{EG08}
Gal Elidan and Stephen Gould.
\newblock Learning bounded treewidth {B}ayesian networks.
\newblock In {\em Advances in Neural Information Processing Systems 21,
  Proceedings of the Twenty-Second Annual Conference on Neural Information
  Processing Systems, NIPS 2008}, pages 417--424, 2008.

\bibitem{FKSSZ23}
Shaun~M. Fallat, David~G. Kirkpatrick, Hans~Ulrich Simon, Abolghasem Soltani,
  and Sandra Zilles.
\newblock On batch teaching without collusion.
\newblock {\em J. Mach. Learn. Res.}, 24:40:1--40:33, 2023.

\bibitem{FW95}
Sally Floyd and Manfred~K. Warmuth.
\newblock Sample compression, learnability, and the vapnik-chervonenkis
  dimension.
\newblock {\em Mach. Learn.}, 21(3):269--304, 1995.

\bibitem{flum2010parameterized}
J\"{o}rg Flum and Martin Grohe.
\newblock {\em Parameterized Complexity Theory}.
\newblock Texts in Theoretical Computer Science. An EATCS Series. Springer
  Berlin Heidelberg, 2010.

\bibitem{icalppaper}
Florent Foucaud, Esther Galby, Liana Khazaliya, Shaohua Li, Fionn {Mc~Inerney},
  Roohani Sharma, and Prafullkumar Tale.
\newblock Problems in {NP} can admit double-exponential lower bounds when
  parameterized by treewidth or vertex cover.
\newblock In {\em 51st International Colloquium on Automata, Languages, and
  Programming, {ICALP} 2024}, volume 297 of {\em LIPIcs}, pages 66:1--66:19.
  Schloss Dagstuhl - Leibniz-Zentrum f{\"{u}}r Informatik, 2024.

\bibitem{FPS19}
Jacob Fox, J{\'{a}}nos Pach, and Andrew Suk.
\newblock Erd{\H{o}}s-{H}ajnal conjecture for graphs with bounded
  {VC}-dimension.
\newblock {\em Discrete Comput. Geom.}, 61(4):809--829, 2019.

\bibitem{FPS21}
Jacob Fox, J{\'{a}}nos Pach, and Andrew Suk.
\newblock Bounded {VC}-dimension implies the {S}chur-{E}rd{\H{o}}s conjecture.
\newblock {\em Combinatorica}, 41(6):803--813, 2021.

\bibitem{GKMR25}
Robert Ganian, Liana Khazaliya, Fionn {Mc~Inerney}, and Mathis Rocton.
\newblock The computational complexity of positive non-clashing teaching in
  graphs.
\newblock In {\em The 13th International Conference on Learning
  Representations, {ICLR} 2025}, 2025.

\bibitem{GanianK21}
Robert Ganian and Viktoriia Korchemna.
\newblock The complexity of {B}ayesian network learning: Revisiting the
  superstructure.
\newblock In {\em Advances in Neural Information Processing Systems 34: Annual
  Conference on Neural Information Processing Systems 2021, NeurIPS 2021},
  pages 430--442, 2021.

\bibitem{GK95}
Sally~A. Goldman and Michael~J. Kearns.
\newblock On the complexity of teaching.
\newblock {\em J. Comput. Syst. Sci.}, 50(1):20--31, 1995.

\bibitem{GRS93}
Sally~A. Goldman, Ronald~L. Rivest, and Robert~E. Schapire.
\newblock Learning binary relations and total orders.
\newblock {\em {SIAM} J. Comput.}, 22(5):1006--1034, 1993.

\bibitem{HY15}
Steve Hanneke and Liu Yang.
\newblock Minimax analysis of active learning.
\newblock {\em J. Mach. Learn. Res.}, 16:3487--3602, 2015.

\bibitem{HW87}
David Haussler and Emo Welzl.
\newblock Epsilon-nets and simplex range queries.
\newblock {\em Discrete Comput. Geom.}, 2:127--151, 1987.

\bibitem{HSW90}
David~P. Helmbold, Robert~H. Sloan, and Manfred~K. Warmuth.
\newblock Learning nested differences of intersection-closed concept classes.
\newblock {\em Mach. Learn.}, 5:165--196, 1990.

\bibitem{HWLW17}
Lunjia Hu, Ruihan Wu, Tianhong Li, and Liwei Wang.
\newblock Quadratic upper bound for recursive teaching dimension of finite {VC}
  classes.
\newblock In {\em The 30th Conference on Learning Theory, {COLT} 2017},
  volume~65 of {\em PMLR}, pages 1147--1156, 2017.

\bibitem{DBLP:journals/jcss/ImpagliazzoP01}
Russell Impagliazzo and Ramamohan Paturi.
\newblock On the complexity of k-{SAT}.
\newblock {\em J. Comput. Syst. Sci.}, 62(2):367--375, 2001.

\bibitem{JP24}
Oliver Janzer and Cosmin Pohoata.
\newblock On the {Z}arankiewicz problem for graphs with bounded {VC}-dimension.
\newblock {\em Combinatorica}, 44(4):839--848, 2024.

\bibitem{KSZ19}
David~G. Kirkpatrick, Hans~Ulrich Simon, and Sandra Zilles.
\newblock Optimal collusion-free teaching.
\newblock In {\em Algorithmic Learning Theory, {ALT} 2019}, volume~98 of {\em
  PMLR}, pages 506--528, 2019.

\bibitem{Kloks94}
Ton Kloks.
\newblock {\em Treewidth, Computations and Approximations}, volume 842 of {\em
  Lecture Notes in Computer Science}.
\newblock Springer, 1994.

\bibitem{korhonen2023single}
Tuukka Korhonen.
\newblock A single-exponential time 2-approximation algorithm for treewidth.
\newblock In {\em 62nd {IEEE} Annual Symposium on Foundations of Computer
  Science, {FOCS} 2021}, pages 184--192, 2021.

\bibitem{KKRUW97}
Evangelos Kranakis, Danny Krizanc, Berthold Ruf, Jorge Urrutia, and Gerhard~J.
  Woeginger.
\newblock The {VC}-dimension of set systems defined by graphs.
\newblock {\em Discrete Appl. Math.}, 77(3):237--257, 1997.

\bibitem{LW86}
Nick Littlestone and Manfred~K. Warmuth.
\newblock Relating data compression and learnability.
\newblock {\em Unpublished manuscript}, 1986.

\bibitem{MSSZ22}
Farnam Mansouri, Hans Simon, Adish Singla, and Sandra Zilles.
\newblock On batch teaching with sample complexity bounded by {VCD}.
\newblock In {\em Advances in Neural Information Processing Systems 35: Annual
  Conference on Neural Information Processing Systems 2022, NeurIPS 2022},
  2022.

\bibitem{M23}
Pasin Manurangsi.
\newblock Improved inapproximability of {VC} dimension and {L}ittlestone's
  dimension via (unbalanced) biclique.
\newblock In {\em 14th Innovations in Theoretical Computer Science Conference,
  {ITCS} 2023}, volume 251 of {\em LIPIcs}, pages 85:1--85:18. Schloss Dagstuhl
  - Leibniz-Zentrum f{\"{u}}r Informatik, 2023.

\bibitem{MHS19}
Omar Montasser, Steve Hanneke, and Nathan Srebro.
\newblock {VC} classes are adversarially robustly learnable, but only
  improperly.
\newblock In {\em Conference on Learning Theory, {COLT} 2019}, volume~99 of
  {\em PMLR}, pages 2512--2530, 2019.

\bibitem{MSWY15}
Shay Moran, Amir Shpilka, Avi Wigderson, and Amir Yehudayoff.
\newblock Compressing and teaching for low {VC}-dimension.
\newblock In {\em {IEEE} 56th Annual Symposium on Foundations of Computer
  Science, {FOCS} 2015}, pages 40--51, 2015.

\bibitem{MW16}
Shay Moran and Manfred~K. Warmuth.
\newblock Labeled compression schemes for extremal classes.
\newblock In {\em Algorithmic Learning Theory - 27th International Conference,
  {ALT} 2016}, volume 9925 of {\em Lecture Notes in Computer Science}, pages
  34--49, 2016.

\bibitem{MY16}
Shay Moran and Amir Yehudayoff.
\newblock Sample compression schemes for {VC} classes.
\newblock {\em J. {ACM}}, 63(3):21:1--21:10, 2016.

\bibitem{DBLP:conf/icml/0001GTG23}
Christopher Morris, Floris Geerts, Jan T{\"{o}}nshoff, and Martin Grohe.
\newblock {WL} meet {VC}.
\newblock In {\em International Conference on Machine Learning, {ICML} 2023},
  volume 202 of {\em PMLR}, pages 25275--25302, 2023.

\bibitem{NSS24}
Tung Nguyen, Alex Scott, and Paul Seymour.
\newblock Induced subgraph density. {VI}. {B}ounded {VC}-dimension.
\newblock {\em arXiv eprint}, 2312.15572, 2024.
\newblock \url{https://arxiv.org/abs/2312.15572}.

\bibitem{NMCJ14}
Siqi Nie, Denis~Deratani Mau{\'{a}}, Cassio~P. de~Campos, and Qiang Ji.
\newblock Advances in learning {B}ayesian networks of bounded treewidth.
\newblock In {\em Advances in Neural Information Processing Systems 27: Annual
  Conference on Neural Information Processing Systems 2014, NIPS 2014}, pages
  2285--2293, 2014.

\bibitem{PW90}
J{\'{a}}nos Pach and Gerhard~J. Woeginger.
\newblock Some new bounds for epsilon-nets.
\newblock In {\em The 6th Annual Symposium on Computational Geometry, SoCG
  1990}, pages 10--15. {ACM}, 1990.

\bibitem{PT20}
D{\"{o}}m{\"{o}}t{\"{o}}r P{\'{a}}lv{\"{o}}lgyi and G{\'{a}}bor Tardos.
\newblock Unlabeled compression schemes exceeding the {VC}-dimension.
\newblock {\em Discrete Appl. Math.}, 276:102--107, 2020.

\bibitem{DBLP:journals/jcss/PapadimitriouY96}
Christos~H. Papadimitriou and Mihalis Yannakakis.
\newblock On limited nondeterminism and the complexity of the {V-C} dimension.
\newblock {\em J. Comput. Syst. Sci.}, 53(2):161--170, 1996.

\bibitem{SCCZ16}
Mauro Scanagatta, Giorgio Corani, Cassio~P. de~Campos, and Marco Zaffalon.
\newblock Learning treewidth-bounded {B}ayesian networks with thousands of
  variables.
\newblock In {\em Advances in Neural Information Processing Systems 29: Annual
  Conference on Neural Information Processing Systems 2016, NIPS 2016}, pages
  1462--1470, 2016.

\bibitem{DBLP:journals/jcss/Schaefer99}
Marcus Schaefer.
\newblock Deciding the {V}apnik-{\v{c}}ervonenkis dimension is
  {$\Sigma^{\text{p}}_3$}-complete.
\newblock {\em J. Comput. Syst. Sci.}, 58(1):177--182, 1999.

\bibitem{S23}
Hans~Ulrich Simon.
\newblock Tournaments, {J}ohnson graphs and {NC}-teaching.
\newblock In {\em International Conference on Algorithmic Learning Theory, ALT
  2023}, volume 201 of {\em PMLR}, pages 1411--1428, 2023.

\bibitem{S25}
Hans~Ulrich Simon.
\newblock {RTD}-conjecture and concept classes induced by graphs.
\newblock {\em arXiv eprint}, 2502.09453, 2025.
\newblock \url{https://arxiv.org/abs/2502.09453}.

\bibitem{SZ15}
Hans~Ulrich Simon and Sandra Zilles.
\newblock Open problem: Recursive teaching dimension versus {VC} dimension.
\newblock In {\em The 28th Conference on Learning Theory, {COLT} 2015},
  volume~40 of {\em {JMLR} Workshop and Conference Proceedings}, pages
  1770--1772. JMLR.org, 2015.

\bibitem{V84}
Leslie~G. Valiant.
\newblock A theory of the learnable.
\newblock In {\em The 16th Annual {ACM} Symposium on Theory of Computing, STOC
  1984}, pages 436--445, 1984.

\bibitem{vcdim}
Vladimir~N. Vapnik and Alexey~Y. Chervonenkis.
\newblock On the uniform convergence of relative frequencies of events to their
  probabilities.
\newblock {\em Theory of Probability \& Its Applications}, 16(2):264--280,
  1971.

\end{thebibliography}

\end{document}